\newtheorem{theorem}{\bf Theorem}
\newtheorem{corollary}{\bf Corollary}
\newtheorem{proposition}{\bf Proposition}
\newtheorem{lemma}{\bf Lemma}
\begin{document}
%
% paper title
% Titles are generally capitalized except for words such as a, an, and, as,
% at, but, by, for, in, nor, of, on, or, the, to and up, which are usually
% not capitalized unless they are the first or last word of the title.
% Linebreaks \\ can be used within to get better formatting as desired.
% Do not put math or special symbols in the title.
\title{An Insurance Contract Design to Boost Storage Participation in the Electricity Market%An Insurance Contract to Boost Storage Participation in the Market in the Presence of Renewables
}
%
%
% author names and IEEE memberships
% note positions of commas and nonbreaking spaces ( ~ ) LaTeX will not break
% a structure at a ~ so this keeps an author's name from being broken across
% two lines.
% use \thanks{} to gain access to the first footnote area
% a separate \thanks must be used for each paragraph as LaTeX2e's \thanks
% was not built to handle multiple paragraphs
%

%\author{Michael~Shell,~\IEEEmembership{Member,~IEEE,}
%        John~Doe,~\IEEEmembership{Fellow,~OSA,}
%        and~Jane~Doe,~\IEEEmembership{Life~Fellow,~IEEE}% <-this % stops a space
%\thanks{M. Shell was with the Department
%of Electrical and Computer Engineering, Georgia Institute of Technology, Atlanta,
%GA, 30332 USA e-mail: (see http://www.michaelshell.org/contact.html).}% <-this % stops a space
%\thanks{J. Doe and J. Doe are with Anonymous University.}% <-this % stops a space
%\thanks{Manuscript received April 19, 2005; revised August 26, 2015.}}

\author{Nayara~Aguiar
        and~Vijay~Gupta% <-this % stops a space
\thanks{The authors are with the Department of Electrical Engineering, University of Notre Dame, Notre Dame, IN 46556 USA
        (email: ngomesde@nd.edu, vgupta2@nd.edu).}
\thanks{This work was supported in part by NSF under grant CNS-1739295 and Sandia National Lab PO 2079716.}
}

\maketitle

% As a general rule, do not put math, special symbols or citations
% in the abstract or keywords.
\begin{abstract}
Energy storage technologies are key to improving grid flexibility in the presence of increasing amounts of intermittent renewable generation. We propose an insurance contract that suitably compensates energy storage systems for providing flexibility. Such a contract provides a wider range of market opportunities for these systems while also incentivizing higher renewable penetration in the grid. We consider a day-ahead market in which generators, including renewables and storage owners, bid to be scheduled for the next operating day. Due to production uncertainty, renewable generators may be unable to meet their day-ahead production schedule, and thus be subject to a penalty. As a hedge against these penalties, we propose an insurance contract between a renewable producer and a storage owner, in which the storage reserves some energy to be used in case of renewable shortfalls. We show that such a contract incentivizes the renewable player to bid higher, thus increasing renewable participation in the electricity mix. It also provides an extra source of revenue for storage owners that may not be profitable with a purely arbitrage-based strategy in the day-ahead market. %Further, we prove this contract is economically beneficial for both players. 
We validate our analysis through two case studies.
\end{abstract}

% Note that keywords are not normally used for peerreview papers.
\begin{IEEEkeywords}
Energy storage, renewable energy, electricity market, decision-making, contract design
\end{IEEEkeywords}

% For peer review papers, you can put extra information on the cover
% page as needed:
% \ifCLASSOPTIONpeerreview
% \begin{center} \bfseries EDICS Category: 3-BBND \end{center}
% \fi
%
% For peerreview papers, this IEEEtran command inserts a page break and
% creates the second title. It will be ignored for other modes.
\IEEEpeerreviewmaketitle

\vspace{-0.5em}
\section{Introduction}
% The very first letter is a 2 line initial drop letter followed
% by the rest of the first word in caps.
% 
% form to use if the first word consists of a single letter:
% \IEEEPARstart{A}{demo} file is ....
% 
% form to use if you need the single drop letter followed by
% normal text (unknown if ever used by the IEEE):
% \IEEEPARstart{A}{}demo file is ....
% 
% Some journals put the first two words in caps:
% \IEEEPARstart{T}{his demo} file is ....
% 
% Here we have the typical use of a "T" for an initial drop letter
% and "HIS" in caps to complete the first word.
%\IEEEPARstart{T}{he} 
%The declining costs of storage technologies have motivated various studies about their usefulness to enhance grid flexibility. Particularly, 
The fast ramp capability of storage technologies makes them ideal candidates to improve system reliability as renewable penetration in the power grid deepens \cite{divya}. In recent years, the most cost-effective entry point for storage operators in the electricity market has been in the ancillary market \cite{xu2016}. However, this market is relatively small as compared to those that cover other grid services and may start to saturate as more energy storage systems enter the market. For instance, there is operational evidence that the PJM RegD market has become saturated \cite{pjm_stateofmarket}. Therefore, new market opportunities are needed to incentivize storage capacity to increase in the grid. 

One potential option is the peaking capacity market. In California alone, a total of 13GW of peaking capacity provided by conventional generators is expected to retire within the next two decades, while the frequency regulation market in the entire United States is estimated at 5GW \cite{denholm}. However, at current storage prices, arbitrage to provide energy in the energy markets based on mechanisms such as time of usage prices are rarely cost-effective~\cite{kalathil2017sharing}. We argue that market mechanisms in which storage operators are compensated for their provision of flexibility rather than merely arbitrage to provide energy {\em on demand} are needed. Such a mechanism will have two benefits. One, by providing extra revenue to storage owners, it will boost investments into storage technology even further. Two, as storage becomes more economical, it will lead to a higher penetration of intermittent renewable energies into the grid, while ensuring reliability of electricity supply.  

In this paper, we propose the design of a mechanism on the lines of an insurance contract between a storage owner and a renewable producer. Through this contract, the storage commits to reserve some energy to be used in case of renewable shortage, while the renewable purchases the right to call upon this reserve if needed. The contract is only feasible when there exists a reserve price for which both participants voluntarily agree to sign it. 
We show that the insurance contract incentivizes the renewable producer to bid higher in the day-ahead market. We also derive a condition for which a storage unit is profitable as an insurance provider, even if it is not profitable in the day-ahead market. Combined, these results prove that the insurance contract proposed is an opportunity to boost storage participation in the market, while increasing the share of renewable energy in the electricity mix. 

This insurance contract can be viewed as lying between the two extremes of the ISO incurring the increasing cost of system reserves in a grid-takes-all-renewable scenario and the renewable incurring the entire cost of its intermittency. In practice, there is already a move towards asking renewable producers to shoulder (some) of the responsibility of hedging against their own variability. For example, the Bonneville Power Administration (BPA) has a self-supply program in which variable energy resources are allowed to supply their own balancing services in lieu of incurring the cost of the services provided by BPA \cite{bpa}. The mechanism we propose is a step towards putting such ad hoc arrangements on a firmer analytical footing. The novelty of our approach consists of: (i) requiring the renewable producer to incur his own uncertainty cost, leading to a scenario where this cost is borne by the agent who causes the electricity supply to become less reliable. This is in contrast to most market operations, which socialize reserve costs among, for example, load serving entities; (ii) modeling the renewable generator and the storage as independent players that aim to find a contract that can benefit them both. Previous works have largely considered these energy sources to be operated jointly or that they are standalone players that do not interact; and (iii) providing analytical and numerical results which bring insights about how the proposed contract serves as an extra economic incentive both for storage technologies to enter the market and for renewables to increase their participation in the electricity mix.

We focus on a stylized two-settlement market composed of a day-ahead market and an ex-post imbalance resolution mechanism. In the day-ahead market, suppliers offer their production for delivery in the next operating day, and their production schedule is settled by an Independent System Operator (ISO). Generators that deviate their actual production from their schedule are penalized. These policies have traditionally been imposed for non-renewable, firm generators; however, the adoption of the same treatment for intermittent renewable generators has also started recently, as grids move away from a take-all-renewables approach \cite{porter}. However, levying of such penalties may lead to renewables bidding conservatively in the market \cite{bitar}. The proposed insurance contract can counteract this undesirable decrease in renewable participation, while properly compensating sources of flexibility.

A related stream of work analyzes how independent storage operators can take advantage of arbitrage opportunities in the market \cite{krishnamurthy,wangy,xu}. There are numerous studies regarding the simultaneous participation of storage in multiple services, such as energy, reserve, and ancillary services \cite{baggu,pandzic,zou}. Although exploring multiple market services allows for a full overview about the revenue potential for storage, it comes at the expense of tractability. This may lead to complex models being solved numerically and a lack of insights about the results achieved. Thus, we consider only the day-ahead energy market. A similar approach has been used, e.g., in \cite{mcconnell}, in which the adoption of cap contracts by storage systems as a hedge against price volatility is analyzed.

Several works have analyzed the optimal placement and sizing of energy storage in the grid \cite{baker,ghofrani,thrampoulidis}. While we do not focus on the placement problem, we evaluate how this aspect plays a role in the adoption of our insurance contract in a case study in which the location of the storage in the grid is varied. The cooperation between storage and renewable sources has also been explored at length in the literature, as in \cite{kim,su,teleke}. These schemes are often approached as a joint operation or a centralized problem, and our formulation departs from this paradigm by considering the energy sources to be independent agents maximizing their own profit. Lastly, risk factors due to renewable variability are included in a variety of previous works. For instance, \cite{kaabeche} designs a hybrid system with considerations to the probability of having insufficient supply, \cite{asensio} analyses the pairing of a risk-averse wind producer and demand response in the day-ahead market, and \cite{li} considers uncertainty when modeling spinning reserve requirements for a microgrid. In our formulation, uncertainty stemming from renewable production influences the price of the insurance contract, as the storage bears more risk when the likelihood of a renewable shortage is high.

\paragraph*{Paper Organization}
The mathematical models are introduced in Section~\ref{sec:models}. Section~\ref{sec:problem} presents the utility functions of the participants, and the insurance contract design problem. In Section~\ref{sec:results}, we derive the strategies of the participants in the day-ahead market, and their conditions to sign the proposed contract, which are used to analyze the contract's feasibility and the profitability of the storage. Section~\ref{sec:casestudy} presents two case studies, followed by conclusions and directions for future work in Section~\ref{sec:conclusions}. The main proofs are given in the Appendix.

\vspace{-0.5em}
\section{Supply and Electricity Market Models}\label{sec:models}

\textit{Energy Storage Model:}\label{sec:storagemodel}
Consider a time horizon divided into $N$ discrete time slots indexed by $k \in \mathcal{K} :=\{0,1,...,N-1\}$. The state of charge of the storage $x_{k}$ is the amount of energy stored at the beginning of time $k$. %In our analysis, 
We consider a perfectly efficient storage whose state of charge follows the dynamics
\begin{equation}\label{eq:dynamics0}
     x_{k+1}= x_k-u_k ~ \forall k \in \mathcal{K},
\end{equation}
where $u_{k}$ is the energy extracted from or injected into the storage, which is positive (resp. negative) if the storage is discharging (resp. charging). The input $u_{k}$ can be denoted as the difference between the discharged $u^+_{k} \geq 0$ and the charged $u^-_{k} \geq 0$ quantities at time $k$. Then, \eqref{eq:dynamics0} can be rewritten as 
\begin{equation}\label{eq:dynamics}
     x_{k+1}= x_{k}-u^+_{k}+u^-_{k} ~ \forall k \in \mathcal{K}.
\end{equation}

%Given a finite energy capacity $\overline{E}$ for the storage, 
The storage state and inputs are constrained by
\begin{equation}\label{eq:S_constr1}
    0 \leq x_k \leq \overline{E}~ \forall k \in \mathcal{K}
\end{equation}
\begin{equation}\label{eq:S_constr2}
    u^+_{k}u^-_{k} = 0 ~ \forall k \in \mathcal{K}
\end{equation}
\begin{equation}\label{eq:S_constr3}
    u^+_{k}, u^-_{k} \geq 0 ~ \forall k \in \mathcal{K}.
\end{equation}
The constraint \eqref{eq:S_constr1} sets the bounds on the amount of energy that can be stored in the storage unit, \eqref{eq:S_constr2} is a complementarity constraint which prevents simultaneous charging and discharging, and \eqref{eq:S_constr3} is a positivity constraint. Using the storage dynamics \eqref{eq:dynamics}, we can write \eqref{eq:S_constr1} recursively to find the compact form
\begin{equation}\label{eq:S_constr1c}
    \mathbf{0 \leq \mathbf{x_0} +A^+u^+ + A^-u^- \leq \overline{E}},
\end{equation}
where $\mathbf{A^+},~\mathbf{A^-} \in \mathbb{R}^{N\times N}$ are triangular matrices with $A^+_{ij}=-1$ and $A^-_{ij}=1$ for all $i\geq j$. The column vectors $\mathbf{u^+},~\mathbf{u^-} \in \mathbb{R}^{N}$ are defined as $\mathbf{u^+} = [u^+_0,...,u^+_{N-1}]^T$ and $\mathbf{u^-} = [u^-_0,...,u^-_{N-1}]^T$. Further, $\mathbf{0}$ is the null column vector of size $N$, and $\mathbf{\overline{E}} = \overline{E} \mathbf{1^T}$, where $\mathbf{1^T} \in \mathbb{R}^N$ is the all-ones column vector. The initial condition $\mathbf{x_0}=x_0\mathbf{1}^T$ accounts for the initial state of charge of the storage, and can be removed if the device is initially discharged. %, that is if $x_0=0$. 
Note that the inequalities in \eqref{eq:S_constr1c} are element-wise.
Let $\mathcal{U}$ denote the set of all pairs $(\mathbf{u^+},\mathbf{u^-})$ that satisfy the storage constraints \eqref{eq:dynamics} -- \eqref{eq:S_constr3}. A storage policy $(\mathbf{u^+},\mathbf{u^-})$ is said to be feasible if $(\mathbf{u^+},\mathbf{u^-}) \in \mathcal{U}$.

We initially ignore the power constraint of the storage,
which restricts its ramp rates. Although we consider a perfectly efficient storage, the problem can be generalized to a more complex model. As shown in \cite{yang}, if the complementarity constraint \eqref{eq:S_constr2} holds, the storage dynamics become 
\begin{equation}
    x_{k+1}= \alpha x_{k}-\frac{1}{\eta^+}u^+_{k}+\eta^-u^-_{k} ~~ \forall k \in \mathcal{K},
\end{equation}
where $\alpha \in (0,1]$, $\eta^+ \in (0,1]$ and $\eta^- \in (0,1]$ are the leakage coefficient, discharging and charging efficiency, respectively. In Section~\ref{sec:storageparticipation}, we derive conditions for which constraint  \eqref{eq:S_constr2} is satisfied. Further, the power constraint, the efficiency and the leakage parameters are incorporated in the storage model in the case study in Section~\ref{sec:casestudy2}. Without loss of generality, we let the storage be initially discharged, so that $\mathbf{x_0}=\mathbf{0}$ in~\eqref{eq:S_constr1c}.

\textit{Renewable Production Model:}
The renewable production is modeled as a discrete-time random process defined by $R=\{R_0,...,R_{N-1}\}$. For each time $k$, the random variable $R_k$ has a continuous and twice differentiable probability density function $f_{k}(r_k)$ and cumulative density function $F_{k}(r_k)$. For simplicity, the random variables $\{R_{k}\}$ are assumed to be mutually independent.

There are many works that consider estimating the probabilistic model for renewable production empirically from historical data. Intuitively, the uncertainty added by this data-driven approach may decrease the renewable producer's expected profit, as discussed in \cite{bitar}. Improving sensors and forecast methods to make the estimate more accurate can help mitigate this effect.

\textit{Electricity Market Model:}
We model a two-settlement market which consists of a day-ahead (DA) market followed by an imbalance settlement that penalizes uninstructed deviations.
The market is operated by an independent system operator (ISO) who is responsible for meeting the load reliably. In the DA market, all generators bid the amount of energy they are willing to commit for delivery in the next operating day. Each player also informs the ISO of their asking price, which is the minimum per-unit price they are willing to accept to deliver the amount committed. The ISO clears the market using a least-cost strategy. 

The ex-post imbalance resolution mechanism penalizes the generators that do not supply the amount of energy that they were cleared for. We focus on the bidding strategies of a renewable producer and an energy storage. Since renewable production is stochastic, this producer may be unable to meet his commitment in real-time. We assume that all renewable production exceeding the commitment is curtailed. If the renewable production is below its commitment, it pays a penalty per unit of shortfall $\lambda_p$ that is fixed and known. We note that the imbalance penalty considered can be thought of as a payment related to uninstructed deviations from dispatch instructions that is settled in real-time (e.g. as described in the CAISO market settlements \cite{caiso}). Further, the penalty values can be modeled as random variables that are assumed to be statistically independent of the renewable production of the player considered in the insurance contract by replacing the price $\lambda_p$ by its expected value.

We ignore transmission line congestion and the emergence of locational marginal prices, and thus all generators are paid a single market price. Incorporating such constraints would require a more complex mathematical model, which is left as a direction for future work. Nonetheless, we perform a case study in a constrained network in Section~\ref{sec:casestudy2} to illustrate numerically that the insurance contract proposed can be extended to a more general framework. Following other works which analyze a small subset of agents in the market, such as \cite{bitar,cho,mcconnell}, the market participants are assumed to be price-takers. It follows that the DA energy price $\mathbf{\lambda} \triangleq [\lambda_0,...,\lambda_{N-1}]^T$ can be treated as fixed and known. Finally, we assume the load to be known. All these considerations simplify the analysis and allow us to focus on the insurance contract being proposed.

\section{Problem Formulation}\label{sec:problem}

\paragraph{Utility Functions}
The utility function of each participant is his expected profit.

\textit{Baseline case:}
Both the renewable and the storage only bid in the day-ahead market and insurance contracts are not allowed. 
The expected profit of the renewable producer is
\begin{equation*}%\label{eq:Jr_b}
J_{r}^{b}(\mathbf{C_{r}}) = \sum_{k=0}^{N-1} \lambda_{k}C_{rk}- E_{Rk}\left[I(C_{rk}-R_k) \lambda_{p}(C_{rk}-R_k) \right],
\end{equation*}
where $\mathbf{C_{r}}=[C_{r0},...,C_{r(N-1)}]^T \in \mathbb{R}^N$ contains the commitments for each time $k$, and $I(.)$ is the indicator function, which equals 1 if the argument is positive and 0 otherwise. For each $k$, the first term  corresponds to the revenue acquired for committing to the day-ahead market. The second term is the expected penalty due to shortage, which is taken over the random renewable production. The utility for the storage is
\begin{equation}\label{eq:Js_b}
J_{s}^{b}(\mathbf{u^+},\mathbf{u^-}) = \sum_{k=0}^{N-1} \lambda_{k}(u^+_{k}-u^-_{k})- g(u^+_{k},u^-_{k}),
\end{equation}
where, for each $k$, the first term is the revenue for supplying to and the cost for demanding from the market. The second term is the operational cost of the storage. We assume the function $g(u^+_{k},u^-_{k})$ to be convex and strictly increasing in $(u^+_{k},u^-_{k})$.

\textit{Insurance contract case:} If the storage and the renewable players are allowed to establish an insurance contract for reserve, %In the contract, the storage supplies some amount of energy to the renewable producer in case of shortage, instead of offering this energy in the day-ahead market. 
the expected profit of the renewable producer %in this scenario 
becomes
\begin{equation}\label{eq:Jr_c}
\begin{split}
 J_{r}^{c}&(\mathbf{C_{r}},\mathbf{G_r},\mathbf{\pi_r}) = \sum_{k=0}^{N-1} \lambda_{k}C_{rk}-\pi_{rk}G_{rk}\\
&-E_{Rk}\left[I(C_{rk}-R_k-G_{rk}) \lambda_{p}(C_{rk}-R_k-G_{rk}) \right],   
\end{split}
\end{equation}
where the vectors $\mathbf{G_{r}}=[G_{r0},...,G_{r(N-1)}]^T \in \mathbb{R}^N$ and $\mathbf{\pi_{r}}=[\pi_{r0},...,\pi_{r(N-1)}]^T \in \mathbb{R}^N$ contain the reserve amounts and (per unit) prices for each time $k$. Compared to the baseline case, the renewable producer has the additional cost of the contract, and the reserve amount helps decrease the expected penalty. For the storage unit, the expected profit is
\begin{equation}\label{eq:Js_c}
\begin{split}
  J_{s}^{c}(\mathbf{u^+},&\mathbf{u^-},\mathbf{\pi_{s}}) = \sum_{k=0}^{N-1} \pi_{sk}u^+_{k}-\lambda_{k}u^-_{k}\\
&-E_{Rk}\left[g(\min(C_{rk}-R_k,u^+_{k}),u^-_{k})\right],  
\end{split}
\end{equation}
where the vector $\mathbf{\pi_{s}}=[\pi_{s0},...,\pi_{s(N-1)}]^T \in \mathbb{R}^N$ is the per unit price of reserve. The last term is the expected operational cost, which shows that the amount supplied by the storage is the lesser of the renewable shortage and the reserve in the contract. We initially consider that the storage supplies energy exclusively to the renewable and charges from the grid. However, we show in Section~\ref{sec:storageparticipation} that the storage decides on a contract through which it is only committed to supply to the renewable player at certain times of the day. Thus, for the remaining hours, the storage can trade in the electricity market in whichever opportunities are available, as long as it has the amount of energy decided in the contract at the time established. In this sense, the contract proposed serves as an additional stream of revenue for storage units in the market.

\paragraph{Contract Design Problem}
Through an insurance contract, the storage unit commits to maintaining some energy reserve available to be used in case of renewable shortage. The contract is signed ex-ante, while in real-time, the storage is called upon to supply this reserve in case of renewable shortage. If the shortage is less than the reserve established in the contract, the storage unit will supply only the amount needed to cover the shortfall; if the reserve is not enough to cover the shortage completely, the storage supplies the entire reserve and the renewable producer is responsible for paying a penalty for the shortage remaining. 

An insurance contract $\mathcal{C}$ is defined as the pair $\{\mathbf{\pi},\mathbf{G}\}$ that establishes the price per unit $\mathbf{\pi}=[\pi_{0},...,\pi_{N-1}]^T \in \mathbb{R}^N$ and amount of energy $\mathbf{G}=[G_{0},...,G_{N-1}]^T \in \mathbb{R}^N$ to be set aside as a reserve by the storage at each time $k$. We say that $\mathcal{C}$ is 
\begin{itemize}
    \item \emph{individual rational} if no participant is worse off by signing the contract, i.e. if the expected profit for any participant does not decrease when she participates in the contract;
    \item \emph{feasible} if it induces a storage policy $(\mathbf{u^+},\mathbf{u^-}) \in \mathcal{U}$ and is individual rational.
\end{itemize}

The renewable producer tries to maximize his own utility when deciding how much to bid in the day-ahead market and how much reserve to procure through an insurance contract. These decisions are made sequentially, as the contract is signed ex-ante. In the day-ahead market, the renewable player solves problem $\mathcal{P}_1$ below, where $\mathbf{G_r}$ and $\mathbf{\pi_r}$ are treated as given. For the insurance contract, the problem to be solved is $\mathcal{P}_2$.
\begin{align*} %\label{eq:maxcr}
\mathcal{P}_1: \underset{\mathbf{C_{r}\geq 0} }{\max} \,\ J_{r}^{c}(\mathbf{C_{r}},\mathbf{G_r},\mathbf{\pi_r}), ~~ \mathcal{P}_2: \underset{\mathbf{G_r},\mathbf{\pi_r}\mathbf{\geq 0} }{\max} \,\ J_{r}^{c}(\mathbf{C_{r}^*},\mathbf{G_r},\mathbf{\pi_r})
\end{align*}

For the storage, $\mathcal{P}_3$ refers to the day-ahead market decision on how much to charge, given the amount decided to supply to the renewable. For the insurance contract ex-ante decision, $\mathcal{P}_4$ is solved. Both problems are subject to $(\mathbf{u^+},\mathbf{u^-})  \in   \mathcal{U}$.
\begin{subequations}
\begin{align*}
\mathcal{P}_3: \underset{\mathbf{u^-\geq 0}}{\max} \,\ J_{s}^{c}(\mathbf{u^+},\mathbf{u^-},\mathbf{\pi_s}), ~~ \mathcal{P}_4: \underset{\mathbf{u^+,\pi_s\geq 0} }{\max} \,\ J_{s}^{c}(\mathbf{u^+},\mathbf{u^{-*}},\mathbf{\pi_s})
\end{align*}
\end{subequations}

These problems can be easily defined for the baseline case by maintaining only the day-ahead problem, letting the reserve amounts and price be zero, and letting the storage supply to the grid instead of to the renewable producer. This formulation can also be adapted into a chance-constrained problem by adding constraints that limit the probability of renewable shortage, such as $\text{Prob}(R_k<C_{rk}^*) < \alpha$. Although that is an interesting problem, here we ignore the chance constrained formulation for simplicity.

\vspace{-0.5em}
\section{Main Results}\label{sec:results}

\subsection{Renewable Participation}
The renewable generator solves the profit-maximizing problems $\mathcal{P}_1$ and $\mathcal{P}_2$ to decide how to participate in the market.

\begin{theorem}\label{th:renewable_decisions}
The optimal renewable day-ahead bid is given~by
\begin{equation}\label{eq:cro}
    C^*_{rk} = G^*_{rk}+F^{-1}_{k} \left(\frac{\lambda_{k}}{\lambda_{p}}\right),
\end{equation}
where the optimal reserve to be purchased in the insurance contract is the maximum available if the per unit price satisfies
\begin{equation}\label{eq:upperbound}
    \pi_{rk} \leq \lambda_k,
\end{equation}
and $G^*_{rk} = 0$ otherwise.
\end{theorem}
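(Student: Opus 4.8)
The plan is to exploit the fact that both $\mathcal{P}_1$ and $\mathcal{P}_2$ decouple across the time index $k$: the summand for slot $k$ in $J_{r}^{c}$ involves only $C_{rk}$, $G_{rk}$, and $\pi_{rk}$, so I can optimize slot by slot. For a fixed $k$ I would first rewrite the expected penalty as a one-dimensional integral. Since $I(C_{rk}-R_k-G_{rk})=1$ exactly when $R_k < C_{rk}-G_{rk}$, the penalty term equals $\lambda_p \int_{-\infty}^{C_{rk}-G_{rk}} (C_{rk}-G_{rk}-r)\,f_k(r)\,dr$, a function of the single ``net commitment'' $C_{rk}-G_{rk}$. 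This reduction is what makes the whole argument transparent.

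For $\mathcal{P}_1$ (optimizing $C_{rk}$ with $G_{rk},\pi_{rk}$ held fixed), I would differentiate the per-slot objective in $C_{rk}$. The only $C_{rk}$-dependent piece is the integral, whose derivative I would compute by Leibniz's rule: the boundary contribution vanishes because the integrand is zero at the upper limit, leaving $\lambda_p F_k(C_{rk}-G_{rk})$. The first-order condition $\lambda_k = \lambda_p F_k(C_{rk}-G_{rk})$ then inverts to $C^*_{rk}=G_{rk}+F_k^{-1}(\lambda_k/\lambda_p)$, which is exactly \eqref{eq:cro}. I would confirm this is the global maximizer via the second derivative $-\lambda_p f_k(\cdot)\le 0$ (strict concavity wherever $f_k>0$), note that well-posedness needs $\lambda_k/\lambda_p\in(0,1)$ for the quantile $F_k^{-1}$ to exist, and verify that the constraint $C_{rk}\ge 0$ is slack since both $G_{rk}$ and $F_k^{-1}(\lambda_k/\lambda_p)$ are nonnegative.

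For $\mathcal{P}_2$, treating the offered price $\pi_{rk}$ as a parameter and optimizing over $G_{rk}$, the crucial observation is that substituting $C^*_{rk}$ makes the net commitment $C^*_{rk}-G_{rk}=F_k^{-1}(\lambda_k/\lambda_p)$ independent of $G_{rk}$. Hence the entire penalty integral collapses to a constant, and the per-slot objective reduces to $(\lambda_k-\pi_{rk})\,G_{rk}$ plus terms that do not involve $G_{rk}$. This is linear in $G_{rk}$ with slope $\lambda_k-\pi_{rk}$: when $\pi_{rk}\le\lambda_k$ the objective is nondecreasing, so $G^*_{rk}$ is driven to the largest feasible reserve available, whereas when $\pi_{rk}>\lambda_k$ the slope is negative and $G^*_{rk}=0$, giving precisely the threshold in \eqref{eq:upperbound} (with the boundary case resolved in favor of the maximum).

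The main obstacle I anticipate is the differentiation of the variable-limit penalty integral: getting the Leibniz boundary term to cancel and correctly identifying the derivative as the CDF $F_k$, rather than anything involving the density at the endpoint. Paired with this is the recognition that the $C^*$-substitution cancels all $G_{rk}$-dependence inside the penalty, which is what turns $\mathcal{P}_2$ from a genuinely nonlinear problem into a trivially linear one whose solution is a bang-bang rule. Everything else, namely separability across $k$, concavity of $\mathcal{P}_1$, and the sign analysis of the linear slope, is routine once those two points are secured.
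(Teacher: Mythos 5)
Your proposal is correct and follows essentially the same route as the paper: solve $\mathcal{P}_1$ by first-order conditions to get $C^*_{rk}=G_{rk}+F_k^{-1}(\lambda_k/\lambda_p)$, then substitute back so the penalty term becomes independent of $G_{rk}$ and the contract objective is linear with slope $\lambda_k-\pi_{rk}$, yielding the threshold rule. Your added details (Leibniz-rule computation, second-order check, well-posedness of the quantile, slackness of $C_{rk}\ge 0$) merely flesh out steps the paper states without elaboration.
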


\begin{proof}
See Appendix~\ref{app:th1}.
\end{proof}

The optimal renewable strategy reduces to that for the baseline case if we let the reserve be zero. As a buyer, the renewable producer sets an upper bound on the reserve price. If the storage asks for a price above this threshold, no contracts are feasible. Note that the renewable producer's bid increases with the amount of reserve purchased through the contract.

\subsection{Storage Participation}\label{sec:storageparticipation}
Even in the deterministic baseline case, deciding how much energy to offer is a complex task for the storage, specially due to the non-linearity in the complementarity constraint \eqref{eq:S_constr2}. The following result shows that this constraint can be relaxed.
\begin{lemma}\label{lem:uu0}
Given a strictly increasing operational cost function for the storage and considering this participant is a profit-maximizer, the complementarity constraint $u^+_{k}u^-_{k} = 0$ for all $k \in \mathcal{K}$ always holds, both in the baseline case and in the presence of an insurance contract.
\end{lemma}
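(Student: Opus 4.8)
The plan is to argue by contradiction using a cost-reducing perturbation that leaves the state-of-charge trajectory, and hence feasibility with respect to \eqref{eq:S_constr1}, untouched. Suppose that at an optimal storage policy there is a slot $k$ with both $u^+_k>0$ and $u^-_k>0$, and set $\delta:=\min(u^+_k,u^-_k)>0$. I would define a new policy identical everywhere except that $\tilde u^+_k=u^+_k-\delta$ and $\tilde u^-_k=u^-_k-\delta$. Since the same $\delta$ is subtracted from both the charged and discharged quantities, the net injection $-u^+_k+u^-_k$ at slot $k$ is unchanged, so by \eqref{eq:dynamics} the whole sequence $\{x_j\}$ is preserved and \eqref{eq:S_constr1} still holds; moreover $\tilde u^+_k,\tilde u^-_k\geq 0$, so \eqref{eq:S_constr3} holds. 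Thus the perturbed policy is feasible (and in fact already drives one of the two quantities to zero at slot $k$), which lets me relax \eqref{eq:S_constr2} and only compare objective values.

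First I would dispatch the baseline case \eqref{eq:Js_b}. There the market term $\lambda_k(u^+_k-u^-_k)$ depends only on the net injection and is therefore unchanged by the perturbation, while the operational cost strictly decreases: both arguments of $g$ drop by $\delta>0$, and since $g$ is strictly increasing we get $g(\tilde u^+_k,\tilde u^-_k)<g(u^+_k,u^-_k)$. Hence $J_s^b$ strictly increases, contradicting optimality, and complementarity follows in the baseline.

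The delicate part, and where I expect the main obstacle, is the insurance case \eqref{eq:Js_c}. Now the revenue term $\pi_{sk}u^+_k-\lambda_k u^-_k$ no longer depends only on the net injection, so the perturbation changes it by $(\lambda_k-\pi_{sk})\delta$. The expected operational cost still strictly decreases, because for every realization of $R_k$ the first argument $\min(C_{rk}-R_k,u^+_k-\delta)\leq\min(C_{rk}-R_k,u^+_k)$ is weakly smaller while the second argument $u^-_k-\delta<u^-_k$ is strictly smaller, so monotonicity of $g$ gives a strict decrease in expectation. The change in profit is therefore $(\lambda_k-\pi_{sk})\delta$ plus a strictly positive cost saving, which is strictly positive as soon as $\pi_{sk}\leq\lambda_k$. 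To close this gap I would invoke Theorem~\ref{th:renewable_decisions}: the storage only supplies reserve the renewable actually buys, so $u^+_k=G^*_{rk}$, and by \eqref{eq:upperbound} the renewable purchases a positive reserve only when the common contract price satisfies $\pi_{sk}=\pi_{rk}\leq\lambda_k$; otherwise $G^*_{rk}=0$, forcing $u^+_k=0$ so that $u^+_k u^-_k=0$ holds trivially. In either case the optimal policy satisfies the complementarity constraint, which completes the argument.
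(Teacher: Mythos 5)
Your proof is correct and follows essentially the same route as the paper: compare two policies with the same net injection, use strict monotonicity of $g$ to show the operational cost drops, observe that in the insurance case the revenue changes by $(\lambda_k-\pi_{sk})\delta$, and close the gap via the renewable's upper bound $\pi_{rk}\leq\lambda_k$ from Theorem~\ref{th:renewable_decisions}. The only difference is presentational: the paper invokes an external result (Theorem~1 of \cite{yang}) to compare the two policies, whereas you construct the $\delta$-perturbation explicitly, which makes the argument self-contained.
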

\begin{proof}
See Appendix~\ref{app:lem1}.
\end{proof}

Lemma~\ref{lem:uu0} allows us to remove the non-linear constraint \eqref{eq:S_constr2}. With this, the storage baseline problem becomes a convex optimization problem. 
Let $k=\max$ (resp. $k=\min$) refer to the time with the maximum (resp. minimum) day-ahead price. Further, let $\mathbf{e_{k}}$ denote the standard $k-$th basis vector whose $k-$th entry is equal to 1 and all other entries are zero. 

\begin{theorem}\label{th:storage_decisions}
The optimal storage policy in the baseline case~is
\begin{equation*}
    \mathbf{u^-} = \mathbf{e_{\min}}\overline{E}, ~~ \mathbf{u^+} = \mathbf{e_{\max}}\overline{E},
\end{equation*}
that is, an arbitrage strategy is adopted. Further, it is individual rational for this player to sign an insurance contract following this policy if the reserve price for $k=\max$ satisfies
\begin{equation}\label{eq:lowerbound}
\begin{split}
    &\pi_{s,\max} \geq \lambda_{\max}-\frac{g(\overline{E},0)}{\overline{E}}\left(1-F_{r,\max}(C_{r,\max}-\overline{E})\right)\\
    &+\frac{1}{\overline{E}}\int_{C_{r,\max}-\overline{E}}^{C_{r,\max}}g(C_{r,\max}-R_{\max},0)f_{\max}(r)dr.
\end{split}
\end{equation}
\end{theorem}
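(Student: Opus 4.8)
The plan is to prove the two assertions in turn: first pin down the baseline arbitrage policy, then evaluate the storage's utility under the contract at that same policy and compare it with the baseline profit to extract the individual--rationality bound on $\pi_{s,\max}$.

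For the first assertion, I would start from the baseline problem of maximizing $J_s^b$ over $\mathcal{U}$. By Lemma~\ref{lem:uu0} the complementarity constraint \eqref{eq:S_constr2} may be dropped without loss, so the feasible set reduces to the polyhedron defined by \eqref{eq:S_constr1c} and \eqref{eq:S_constr3}, and, since $g$ is convex, the problem is a convex program. Because the storage is perfectly efficient and initially empty ($\mathbf{x_0}=\mathbf{0}$), every unit discharged must first have been charged, and the state-of-charge bound \eqref{eq:S_constr1} caps the total cyclable energy at $\overline{E}$. For a fixed amount of cycled energy, the revenue $\sum_k \lambda_k(u_k^+-u_k^-)$ is maximized by buying at the cheapest period $k=\min$ and selling at the dearest period $k=\max$, capturing the full price spread $(\lambda_{\max}-\lambda_{\min})\overline{E}$; I would make this rigorous either through the KKT conditions of the convex program or through an exchange argument that shifts any charging (resp. discharging) away from $k=\min$ (resp. $k=\max$) and shows the objective cannot decrease. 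This yields $\mathbf{u^-}=\mathbf{e_{\min}}\overline{E}$ and $\mathbf{u^+}=\mathbf{e_{\max}}\overline{E}$, with feasibility requiring that $k=\min$ precede $k=\max$ so that the charge is available when discharged.

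For the second assertion, I would evaluate both utilities at this policy. The optimal baseline profit is $J_s^{b\ast}=(\lambda_{\max}-\lambda_{\min})\overline{E}-g(\overline{E},0)-g(0,\overline{E})$, the arbitrage spread net of the charging and discharging operational costs. Under the contract the storage keeps the same charging leg, so its charging cost $\lambda_{\min}\overline{E}$ and the charging-side operational cost $g(0,\overline{E})$ (recall $u^+_{\min}=0$) are unchanged; at $k=\max$, however, it now collects the fixed reserve payment $\pi_{s,\max}\overline{E}$ in place of the market revenue $\lambda_{\max}\overline{E}$, while its discharging cost becomes the \emph{expected} operational cost $E_{R_{\max}}[g(\min(C_{r,\max}-R_{\max},\overline{E}),0)]$, since it supplies only the realized shortfall up to the reserve. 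Imposing individual rationality $J_s^{c\ast}\ge J_s^{b\ast}$ and cancelling the common charging-side terms leaves $\pi_{s,\max}\overline{E}-E_{R_{\max}}[g(\min(C_{r,\max}-R_{\max},\overline{E}),0)]\ge \lambda_{\max}\overline{E}-g(\overline{E},0)$. It then remains to evaluate the expectation by splitting the support of $R_{\max}$ into three regimes: no shortfall ($R_{\max}>C_{r,\max}$), where the storage supplies nothing so the cost is $g(0,0)=0$; a partial shortfall ($C_{r,\max}-\overline{E}\le R_{\max}\le C_{r,\max}$), where it supplies $C_{r,\max}-R_{\max}$ at cost $g(C_{r,\max}-R_{\max},0)$; and a shortfall exceeding the reserve ($R_{\max}<C_{r,\max}-\overline{E}$), where it supplies the full $\overline{E}$ at cost $g(\overline{E},0)$. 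This gives $E_{R_{\max}}[\cdots]=g(\overline{E},0)F_{r,\max}(C_{r,\max}-\overline{E})+\int_{C_{r,\max}-\overline{E}}^{C_{r,\max}} g(C_{r,\max}-r,0)f_{\max}(r)\,dr$; substituting and dividing by $\overline{E}$ rearranges the inequality into the stated bound \eqref{eq:lowerbound}.

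The main obstacle is the optimality argument in the first part: because $g$ is convex and increasing, spreading trades across several periods lowers the total operational cost, so concentrating all charging and discharging at the two extreme-price periods is not automatically optimal, and the exchange/KKT argument must be organized to show that the gain in captured price spread dominates the cost penalty of concentration --- effectively requiring that arbitrage over the full spread remain profitable. By contrast, the second part is a direct computation once the baseline policy and profit are fixed; its only delicate point is the correct case-splitting of the $\min(\cdot)$ term in the expected operational cost, which must match the three regimes above.
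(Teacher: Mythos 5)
Your proposal follows essentially the same route as the paper's proof: relax the complementarity constraint via Lemma~\ref{lem:uu0}, certify the arbitrage policy through the KKT conditions of the resulting convex program, and then obtain \eqref{eq:lowerbound} by comparing $J_s^{b} = (\lambda_{\max}-\lambda_{\min})\overline{E}-g(0,\overline{E})-g(\overline{E},0)$ with $J_s^{c} = (\pi_{s,\max}-\lambda_{\min})\overline{E}-g(0,\overline{E})-E_{R,\max}\left[g(\min(C_{r,\max}-R_{\max},\overline{E}),0)\right]$, splitting the expectation over exactly the same regimes you list (the paper, like you, implicitly uses $g(0,0)=0$ so the no-shortfall region contributes nothing). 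Your second part is complete, and your algebra reproduces the stated bound exactly.

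The obstacle you flag in the first part is genuine, and it is in fact a gap in the paper's own argument, which disposes of optimality by asserting it ``follows from direct inspection'' of \eqref{eq:kkt_s1}--\eqref{eq:kkt_c2}. For strictly convex $g$ the concentrated arbitrage policy need not be optimal at all: take $g(u^+,u^-)=(u^+)^2+(u^-)^2$, $\overline{E}=1$, and prices in time order $(\lambda_0,\lambda_1,\lambda_2,\lambda_3)=(0,\,0.01,\,10,\,9.99)$. Concentrated arbitrage earns $10 - g(1,0)-g(0,1) = 8$, whereas charging $0.5$ in each of periods $0,1$ and discharging $0.5$ in each of periods $2,3$ is feasible and earns $9.99 - 4\,g(0.5,0) = 8.99$, because a convex cost with $g(0,0)=0$ is superadditive, so splitting a trade cuts operational cost faster than it erodes revenue. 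Hence the KKT system cannot be satisfied at the arbitrage point in that instance, and no reorganization of the exchange argument will save the claim without an extra hypothesis. The first assertion of Theorem~\ref{th:storage_decisions} is airtight only under something like linearity of $g$ (which is what both case studies use), or more generally a condition that the marginal-cost increase from concentrating trades is dominated by the price gaps $\lambda_{\max}-\lambda_k$ and $\lambda_k-\lambda_{\min}$ at every $k$; with linear $g$ your exchange argument goes through immediately, since splitting is then cost-neutral and shifting trades toward the extreme-price periods weakly increases revenue. So your caution is warranted --- state the additional hypothesis explicitly rather than trying to prove the claim as written.
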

\begin{proof}
See Appendix~\ref{app:th2}.
\end{proof}

The lower bound the storage sets on the reserve price can be interpreted as the minimum price for which the per unit expected profit earned through the contract is at least equal to the per unit expected profit that can be achieved by offering that energy in the day-ahead market. This bound depends on how likely it is that the renewable shortage is less or greater than the insurance reserve, as that determines the expected storage cost. Such dependence is expressed through the terms that include the renewable production probability functions, and the lower bound \eqref{eq:lowerbound} increases as the expected renewable shortage increases. When a shortage at least equal to the insurance reserve surely happens, that is as
\begin{equation*}
    \text{Prob}(R_{\max} \leq C_{r,\max}-\overline{E}) \to 1,
\end{equation*}
then $\pi_{s,\max} \to \lambda_{\max}$ and the storage will require at least the price that he would get in the day-ahead market.

We remark that this result can be readily extended to a storage that is not perfectly efficient at the expense of more notation. In this case, the arbitrage policy adopted would cover a wider interval of time periods, depending on the power rating and the efficiency of the storage. This device would still charge around the times when the energy price is low, and discharge around the times of high prices. Thus, an insurance contract could be signed at any of the time periods in which the storage would discharge to the grid.

\subsection{Insurance Contract Feasibility}\label{sec:feasibility}
In what follows, we analyze if the renewable and the storage players ever agree on the economic terms of the contract.

\begin{theorem}\label{th:feasible}
Let the storage adopt the arbitrage policy 
\begin{equation*}
    \mathbf{u^-} = \mathbf{e_{\min}}\overline{E}, ~~ \mathbf{u^+} = \mathbf{e_{\max}}\overline{E}.
\end{equation*}
Then, the interval of per unit reserve prices for an insurance contract to be feasible under this policy is given by
\begin{equation*}
    \mathcal{I} = [\pi_{s,\max},\lambda_{\max}],
\end{equation*}
with $\pi_{s,\max}$ as in \eqref{eq:lowerbound}, and is always non-empty.
\end{theorem}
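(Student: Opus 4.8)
The plan is to build the feasible-price interval directly from the two individual-rationality conditions already proved, and then to certify that it is non-empty with a single monotonicity estimate on the storage cost. Since a signed contract $\mathcal{C}=\{\mathbf{\pi},\mathbf{G}\}$ carries one per-unit price, the renewable and the storage necessarily transact at a common $\pi$ at $k=\max$; that is, $\pi_r=\pi_s=\pi$. First I would invoke Theorem~\ref{th:renewable_decisions}: by \eqref{eq:upperbound} the renewable producer is willing to purchase the offered reserve (and is therefore no worse off than in the baseline) exactly when $\pi\le\lambda_{\max}$, which furnishes the buyer's individual-rationality constraint and the upper endpoint of $\mathcal{I}$. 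Next I would invoke Theorem~\ref{th:storage_decisions}: under the prescribed arbitrage policy it is individual rational for the storage to sign precisely when $\pi\ge\pi_{s,\max}$, with $\pi_{s,\max}$ the right-hand side of \eqref{eq:lowerbound}, giving the lower endpoint. Because the arbitrage policy belongs to $\mathcal{U}$ (Theorem~\ref{th:storage_decisions}), the policy-inducement part of feasibility is automatic, so a contract at price $\pi$ is feasible if and only if $\pi\in[\pi_{s,\max},\lambda_{\max}]=\mathcal{I}$.

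It then remains to show $\mathcal{I}\neq\emptyset$, i.e. $\pi_{s,\max}\le\lambda_{\max}$. Substituting \eqref{eq:lowerbound}, cancelling $\lambda_{\max}$ from both sides and clearing the positive factor $\overline{E}$, I would reduce this to the single inequality
\[
\int_{C_{r,\max}-\overline{E}}^{C_{r,\max}} g(C_{r,\max}-r,0)\,f_{\max}(r)\,dr
\;\le\; g(\overline{E},0)\bigl(1-F_{r,\max}(C_{r,\max}-\overline{E})\bigr).
\]
The key estimate is monotonicity of $g$ in its first argument: over the integration range $r\in[C_{r,\max}-\overline{E},\,C_{r,\max}]$ the shortfall $C_{r,\max}-r$ lies in $[0,\overline{E}]$, so $g(C_{r,\max}-r,0)\le g(\overline{E},0)$ pointwise. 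Pulling the constant $g(\overline{E},0)$ out of the integral leaves the probability mass $F_{r,\max}(C_{r,\max})-F_{r,\max}(C_{r,\max}-\overline{E})$, which is bounded above by $1-F_{r,\max}(C_{r,\max}-\overline{E})$ because $F_{r,\max}(C_{r,\max})\le1$. Chaining these two bounds yields the displayed inequality and hence $\pi_{s,\max}\le\lambda_{\max}$.

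The argument is short because the prior theorems do the real work; the only genuine content here is the integral bound, and the step I expect to require the most care is the final multiplication. Enlarging the probability mass from $\int_{C_{r,\max}-\overline{E}}^{C_{r,\max}}$ to $1-F_{r,\max}(C_{r,\max}-\overline{E})$ is harmless only when the multiplying factor $g(\overline{E},0)$ is non-negative; if $g$ could take negative values the inequality might flip. I would therefore lean on the operational cost being non-negative --- equivalently $g(0,0)\ge0$, whence $g(\overline{E},0)>g(0,0)\ge0$ by strict monotonicity --- to close the estimate. With that observation in place, $\mathcal{I}=[\pi_{s,\max},\lambda_{\max}]$ is always non-empty, as claimed.
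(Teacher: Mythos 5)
Your proposal is correct and takes essentially the same route as the paper's proof: both reduce non-emptiness to the inequality $\int_{C_{r,\max}-\overline{E}}^{C_{r,\max}} g(C_{r,\max}-r,0)\,f_{\max}(r)\,dr \le g(\overline{E},0)\left(1-F_{r,\max}(C_{r,\max}-\overline{E})\right)$, established by bounding the integrand pointwise via monotonicity of $g$ and then enlarging the probability mass $F_{r,\max}(C_{r,\max})-F_{r,\max}(C_{r,\max}-\overline{E})$ to $1-F_{r,\max}(C_{r,\max}-\overline{E})$. Your explicit remark that this enlargement requires $g(\overline{E},0)\ge 0$ (non-negativity of the operational cost) is a detail the paper leaves implicit, but it is a natural standing assumption and does not change the argument.
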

\begin{proof}
See Appendix~\ref{app:th3}.
\end{proof}
Insurance contracts are thus always feasible for our framework, which considers an unconstrained network. For a constrained network, in Section~\ref{sec:casestudy2}, we show that feasibility will depend on the location of the players in the grid, due to the differences in locational marginal prices across the network.

The worst-case scenario for the storage unit occurs when the renewable shortage is at least as large as the reserve amount established in the contract. In this case, the storage unit will incur the operational cost of supplying the full reserve. For a given insurance contract, we can use this fact to establish a lower bound on the expected profit of the storage. 

\begin{corollary}\label{cor:feasibility}
Let the storage unit follow the arbitrage policy in Theorem~\ref{th:storage_decisions} to offer the reserve $\mathbf{u^+} = \mathbf{e_{\max}}\overline{E}$ in the insurance contract. Then, the contract with a reserve price $\mathbf{\pi}_{\max}=\mathbf{\lambda}_{\max}$ is feasible, and leads to a storage expected profit that is lower bounded by the day-ahead expected profit for the same policy.
\end{corollary}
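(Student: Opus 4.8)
The plan is to handle the two claims separately: feasibility follows directly from Theorem~\ref{th:feasible}, and the profit bound follows from a term-by-term comparison of the contract and baseline utilities under the fixed arbitrage policy. First I would settle feasibility. Theorem~\ref{th:feasible} establishes that, under the arbitrage policy $\mathbf{u^-}=\mathbf{e_{\min}}\overline{E}$, $\mathbf{u^+}=\mathbf{e_{\max}}\overline{E}$, the set of per-unit reserve prices yielding a feasible contract is the non-empty interval $\mathcal{I}=[\pi_{s,\max},\lambda_{\max}]$. Since the proposed price $\pi_{\max}=\lambda_{\max}$ is precisely the upper endpoint of $\mathcal{I}$, it belongs to $\mathcal{I}$, so the contract is feasible; in particular it is individual rational and induces a policy in $\mathcal{U}$. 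No further work is needed for this part.

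Next, for the profit bound I would evaluate the storage utility $J_{s}^{c}$ in \eqref{eq:Js_c} and the baseline utility $J_{s}^{b}$ in \eqref{eq:Js_b} at the arbitrage policy $(\mathbf{u^+},\mathbf{u^-})=(\mathbf{e_{\max}}\overline{E},\mathbf{e_{\min}}\overline{E})$. Since this policy is supported only on $k=\max$ (discharge) and $k=\min$ (charge), every summand with $k\notin\{\max,\min\}$ vanishes in both expressions. At $k=\min$ the storage merely charges $\overline{E}$ from the grid at price $\lambda_{\min}$ with cost $g(0,\overline{E})$ in both formulations, so these charging terms are identical and cancel in the difference $J_{s}^{c}-J_{s}^{b}$. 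Setting $\pi_{s,\max}=\lambda_{\max}$ also makes the discharge revenues coincide, since $\pi_{s,\max}\overline{E}=\lambda_{\max}\overline{E}$. Hence the whole difference collapses to the $k=\max$ operational-cost term:
\[
J_{s}^{c}-J_{s}^{b}=g(\overline{E},0)-E_{R,\max}\!\left[g\!\left(\min(C_{r,\max}-R_{\max},\overline{E}),0\right)\right].
\]

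Finally, I would close the argument by monotonicity. Because $\min(C_{r,\max}-R_{\max},\overline{E})\leq\overline{E}$ pointwise and $g$ is strictly increasing in its first argument, the integrand is bounded above by $g(\overline{E},0)$, so the expectation is at most $g(\overline{E},0)$ and the displayed difference is nonnegative. This yields $J_{s}^{c}\geq J_{s}^{b}$, i.e. the contract profit is lower bounded by the day-ahead profit for the same policy. I would further note that equality is attained precisely in the worst case $\text{Prob}(R_{\max}\leq C_{r,\max}-\overline{E})=1$, where the minimum saturates at $\overline{E}$ almost surely and the storage pays the full discharge cost; this matches the limiting behavior $\pi_{s,\max}\to\lambda_{\max}$ observed after \eqref{eq:lowerbound} and confirms the bound is tight. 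The only real care required is the bookkeeping verifying that the off-support and $k=\min$ terms genuinely cancel, so that the comparison reduces cleanly to the single $k=\max$ cost term; once that reduction is made, the inequality is immediate from the monotonicity of $g$, so I do not expect any substantive obstacle.
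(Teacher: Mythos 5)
Your proposal is correct and follows essentially the same route as the paper: feasibility is obtained by noting that $\lambda_{\max}$ is the upper endpoint of the interval $\mathcal{I}$ from Theorem~\ref{th:feasible}, and the profit bound is the paper's worst-case argument (the contract cost term is maximized when $\min(C_{r,\max}-R_{\max},\overline{E})=\overline{E}$, at which point $J_s^c=J_s^b$), which you have simply made explicit by computing $J_s^c-J_s^b=g(\overline{E},0)-E_{R,\max}\bigl[g\bigl(\min(C_{r,\max}-R_{\max},\overline{E}),0\bigr)\bigr]$ and invoking monotonicity of $g$. Your term-by-term cancellation and tightness remark are sound and, if anything, fill in bookkeeping the paper leaves implicit.
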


\begin{proof}
See Appendix~\ref{app:cor1}
\end{proof}

\subsection{Storage Profitability Analysis}
We analyze whether the storage owner can be profitable as an insurance provider, even though it is not competitive in the day-ahead market. We highlight that we do not make any inferences about how profitable this energy storage may be in any services other than the day-ahead energy market. Instead, we compare the participation in the day-ahead market versus the provision of reserve to a renewable generator through an insurance contract. The following result follows from direct inspection of the expected profits for the storage unit in the baseline and in the insurance contract cases, and establishes a condition for the storage profitability in the day-ahead market.
 
\begin{theorem}\label{th:profitability}
Let $\pi$ denote the per unit price of reserve in the insurance contract and define the bounds
\begin{align}
    \underline{\Lambda} =~&1-\frac{g(0,\overline{E})}{\lambda_{\max}\overline{E}}- \frac{g(\overline{E},0)}{\lambda_{\max}\overline{E}}\\
    \overline{\Lambda} =~& \frac{\pi}{\lambda_{\max}}-\frac{g(0,\overline{E})}{\lambda_{\max}\overline{E}}-\frac{g(\overline{E},0)F_{r,\max}(C_{r,\max}-\overline{E})}{\lambda_{\max}\overline{E}}\\
    \nonumber &-\frac{1}{\lambda_{\max}\overline{E}}\int_{C_{r,\max}-\overline{E}}^{C_{r,\max}}g(C_{r,\max}-R_{\max},0)f_{\max}(r)dr
\end{align}
When adopting an arbitrage policy, the storage is profitable as an insurance provider, but not in the day-ahead market, if
\begin{equation}\label{eq:ratiocondition}
    \underline{\Lambda} \leq \frac{\lambda_{\min}}{\lambda_{\max}} < \overline{\Lambda}.
\end{equation}
\end{theorem}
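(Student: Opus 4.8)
The plan is to evaluate the two storage utilities explicitly under the stated arbitrage policy and then read off the two inequalities of \eqref{eq:ratiocondition} by normalizing each profitability condition by $\lambda_{\max}\overline{E}$. The claim decomposes naturally: I expect the left inequality $\underline{\Lambda}\le\lambda_{\min}/\lambda_{\max}$ to encode the storage \emph{not} being profitable in the day-ahead market, and the right inequality $\lambda_{\min}/\lambda_{\max}<\overline{\Lambda}$ to encode its being profitable as an insurance provider, so the task reduces to deriving each bound from the corresponding profit expression.

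First I would substitute $\mathbf{u^-}=\mathbf{e_{\min}}\overline{E}$ and $\mathbf{u^+}=\mathbf{e_{\max}}\overline{E}$ into the baseline utility \eqref{eq:Js_b}. Since the storage charges $\overline{E}$ at $k=\min$, discharges $\overline{E}$ at $k=\max$, and is idle otherwise (so the remaining operational costs vanish), the baseline profit collapses to $J_s^b=(\lambda_{\max}-\lambda_{\min})\overline{E}-g(\overline{E},0)-g(0,\overline{E})$. Imposing the non-profitability condition $J_s^b\le 0$ and dividing through by $\lambda_{\max}\overline{E}$ rearranges directly into $\underline{\Lambda}\le\lambda_{\min}/\lambda_{\max}$, establishing the left inequality.

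Next I would evaluate the insurance utility \eqref{eq:Js_c} under the same policy, now with the $\overline{E}$ offered at $k=\max$ serving as the contracted reserve at price $\pi=\pi_{s,\max}$. The only delicate piece is the expected operational cost $E\!\left[g(\min(C_{r,\max}-R_{\max},\overline{E}),0)\right]$, which I would compute by conditioning on the renewable outcome into three regions: when $R_{\max}\ge C_{r,\max}$ there is no shortage and the supplied quantity (hence the cost) is zero; when $C_{r,\max}-\overline{E}\le R_{\max}<C_{r,\max}$ the storage supplies the partial shortage $C_{r,\max}-R_{\max}$; and when $R_{\max}<C_{r,\max}-\overline{E}$ it supplies the full reserve $\overline{E}$. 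Integrating against $f_{\max}$ yields $g(\overline{E},0)F_{r,\max}(C_{r,\max}-\overline{E})+\int_{C_{r,\max}-\overline{E}}^{C_{r,\max}}g(C_{r,\max}-r,0)f_{\max}(r)\,dr$. The charging cost at $k=\min$ is unaffected by the reserve, since $u^+_{\min}=0$ deactivates the $\min$ operator there, so that term stays $g(0,\overline{E})$. Substituting back gives $J_s^c=\pi\overline{E}-\lambda_{\min}\overline{E}-g(0,\overline{E})$ minus the expected discharge cost above; imposing profitability $J_s^c>0$ and dividing by $\lambda_{\max}\overline{E}$ produces exactly $\lambda_{\min}/\lambda_{\max}<\overline{\Lambda}$, the right inequality.

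Conjoining the two equivalences then yields the compound condition \eqref{eq:ratiocondition}. The main obstacle is the three-region case analysis for the expected reserve-supply cost, which must be set up carefully so that the $F_{r,\max}$ term and the truncated integral appear with the correct limits; beyond that the argument is routine algebraic normalization, and I would double-check that $g(0,0)=0$ (implied by $g$ being strictly increasing from the origin) so that all idle time slots contribute nothing to either profit.
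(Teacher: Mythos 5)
Your proposal is correct and takes essentially the same approach as the paper: the paper's proof is precisely the ``direct inspection'' you carry out, imposing $J_s^{b} \leq 0$ (not competitive in the day-ahead market) and $J_s^{c} > 0$ (profitable as an insurance provider) on the arbitrage-policy profit expressions already derived in the proof of Theorem~\ref{th:storage_decisions}, then normalizing by $\lambda_{\max}\overline{E}$. Your three-region evaluation of the expected discharge cost reproduces exactly the paper's decomposition $g(\overline{E},0)F_{r,\max}(C_{r,\max}-\overline{E})+\int_{C_{r,\max}-\overline{E}}^{C_{r,\max}}g(C_{r,\max}-r,0)f_{\max}(r)\,dr$, so the two derivations coincide.
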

\begin{proof}
Proof follows from direct inspection of the expected profits for the storage unit in the baseline and in the insurance contract cases. We simultaneously check conditions for which this player is not competitive in the day-ahead market ($J_s^{b} \leq 0$), but is profitable as an insurance provider ($J_s^{c} > 0$).
\end{proof}

Condition \eqref{eq:ratiocondition} establishes bounds on the ratio between the minimum and the maximum day-ahead prices. This price ratio is determinant to how much profit the storage can earn using arbitrage. If the lower bound holds, the difference between the maximum and minimum day-ahead prices is not high enough to cover the storage operational costs and yield a positive profit for this unit in the day-ahead market. On the other hand, the upper bound is satisfied when the payments from the insurance contract are high enough to cover both the storage operational cost and the payments made to charge from the grid.

This analysis shows that storage technologies that are still too expensive to bid in the day-ahead market may, instead, offer their energy to renewables through insurance contracts. These contracts serve as alternative sources of revenue for such storage units, keeping them from being idle when they lack competitiveness, and providing additional economic incentives for the improvement of their technology. We remark that in California, for example, it has been identified that there is a limit to the amount of storage that can provide peaking capacity according to their 4-hour rule, which credits storage units that can sustain 4 hours at maximum output \cite{denholm}. Further, the price ratio \eqref{eq:ratiocondition} increases as more peak-shaving services are provided, making it more likely that the lower bound holds, and thus more difficult for new storage operators to enter the market. Thus, we can envision that insurance contracts will allow for a higher penetration of storage in the grid by serving as an entry point in the market for peaking capacity. In Section~\ref{sec:casestudy1}, we perform a case study in which we show that the condition in Theorem~\ref{th:profitability} is not too stringent.

\subsection{Model Extension - Two-Way Contract Structure}\label{sec:modelext}
We extend our model to a full insurance contract which permits production exchange from the renewable to the storage if there is excess renewable generation. Trading this excess production is beneficial for both players. For the storage, purchasing energy at a price lower than the market price leads to a decreased cost of charging. For the renewable, selling excess energy at any price is preferable to being~curtailed. The possibility of an extra revenue may affect the renewable participation in the market, as any increase in the day-ahead commitment is now also linked to a decrease in the expected revenue from selling excess generation. To evaluate this trade-off, we modify the renewable utility \eqref{eq:Jr_c} to add the opportunity to sell excess production at the price $\pi_{ek}$ for each $k$.
\begin{equation}\label{eq:Jr_c2}
\begin{split}
 J_{r}&(\mathbf{C_{r}},\mathbf{G_r},\mathbf{\pi_r},\mathbf{\pi_e}) = \sum_{k=0}^{N-1} \lambda_{k}C_{rk}-\pi_{rk}G_{rk}\\
 &+E_{Rk}\left[I(R_k-C_{rk}) \pi_{ek}(R_k-C_{rk}) \right]\\
 &-E_{Rk}\left[I(C_{rk}-R_k-G_{rk}) \lambda_{p}(C_{rk}-R_k-G_{rk}) \right].   
\end{split}
\raisetag{0.8\baselineskip}
\end{equation}

The following results hold for every time instant $k \in \mathcal{K}$, but the time subscript is suppressed for notational simplicity.

\begin{proposition}\label{prop:extension}
Let the renewable generator sign an insurance contract with the two-way structure described in Section~\ref{sec:modelext}. Then, for  small enough $\pi_{e}$, the optimal commitment $C_r^*$ satisfies the equilibrium condition
\begin{equation}\label{eq:pr4equilibrum}
    \lambda = \lambda_pF_R\left(C_r^*-G_r\right) + \pi_e\left(1-F_R(C_r^*)\right).
\end{equation}
On the other hand, for large enough $\pi_{e}$ the player chooses to offer its maximum capacity in the day-ahead market if
\begin{equation*}
     \pi_e\mu_R \leq \lambda C_r-\pi_{r}G_{r}-E_{R}\left[I(C_{r}-R-G_{r}) \lambda_{p}(C_{r}-R-G_{r}) \right],
\end{equation*}
where $\mu_R$ is the expected renewable production. Otherwise, the renewable generator bids zero in the day-ahead market and sells all production to the storage unit.
\end{proposition}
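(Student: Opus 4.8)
The plan is to treat the renewable subproblem in \eqref{eq:Jr_c2} as a one-dimensional maximization of $J_r(C_r)$ over $C_r \geq 0$ (with the other arguments fixed), and to split the analysis according to whether the interior stationary point is feasible. First I would rewrite the two expectation terms as integrals against the density $f_R$: the selling term becomes $\pi_e \int_{C_r}^{\infty}(r-C_r)f_R(r)\,dr$ and the penalty term becomes $\lambda_p \int_{0}^{C_r-G_r}(C_r-G_r-r)f_R(r)\,dr$. Differentiating in $C_r$ by Leibniz's rule, the two boundary contributions vanish because each integrand is linear in the displacement and equals zero at its moving limit ($r=C_r$ and $r=C_r-G_r$, respectively). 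What survives is
\begin{equation*}
J_r'(C_r)=\lambda-\pi_e\bigl(1-F_R(C_r)\bigr)-\lambda_p F_R(C_r-G_r),
\end{equation*}
and setting $J_r'(C_r^*)=0$ yields exactly the equilibrium condition \eqref{eq:pr4equilibrum}.

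Next I would establish that this stationary point is a maximizer in the small-$\pi_e$ regime. The second derivative is $J_r''(C_r)=\pi_e f_R(C_r)-\lambda_p f_R(C_r-G_r)$, which is non-positive once $\pi_e$ is small enough relative to $\lambda_p$, making $J_r$ concave on the relevant range and the solution of \eqref{eq:pr4equilibrum} the unique interior maximizer. Feasibility of this point is governed by the sign of $J_r'(0)=\lambda-\pi_e$ (using $F_R(0)=F_R(-G_r)=0$ for nonnegative production and $G_r\ge 0$): when $\pi_e<\lambda$ the objective is increasing at the origin, and since $J_r'(C_r^{\max})<0$ under the standard assumption $\lambda<\lambda_p$, the intermediate value theorem delivers an interior maximizer. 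This is precisely what "small enough $\pi_e$" encodes.

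For the large-$\pi_e$ regime, $J_r'(0)<0$, so the stationary point is no longer a feasible interior maximizer and the optimum is attained at a boundary of $[0,C_r^{\max}]$. I would then evaluate the objective at the two endpoints. At $C_r=0$ the renewable signs no contract ($G_r=0$), incurs no shortfall penalty, and sells its entire output, giving $J_r(0)=\pi_e\mu_R$. At $C_r=C_r^{\max}$ (large enough that the selling term vanishes) the objective equals $\lambda C_r-\pi_r G_r-E_R\left[I(C_r-R-G_r)\lambda_p(C_r-R-G_r)\right]$. Comparing $J_r(C_r^{\max})\ge J_r(0)$ produces exactly the stated inequality, under which the max-capacity bid dominates; when it fails, bidding zero and selling all production is optimal.

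The main obstacle I anticipate is the large-$\pi_e$ boundary argument: because $J_r$ need not be globally concave once $\pi_e$ is large, I must rule out spurious interior local maxima and confirm that the global optimum genuinely lies at one of the two endpoints before comparing them. Establishing this cleanly---most likely by showing $J_r'$ changes sign in a controlled way, or by bounding $J_r''$ so that any interior stationary point is a local minimum---is the delicate step; the Leibniz differentiation and the endpoint evaluations are otherwise routine.
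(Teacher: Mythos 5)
Your proposal is correct and follows essentially the same route as the paper's proof: differentiate $J_r$ in $C_r$, obtain the equilibrium condition \eqref{eq:pr4equilibrum} from the first-order condition in the concave (small $\pi_e$) regime, and compare the two endpoint values $\pi_e\mu_R$ and $\lambda C_r-\pi_{r}G_{r}-E_{R}\left[I(C_{r}-R-G_{r})\lambda_p(C_{r}-R-G_{r})\right]$ in the large-$\pi_e$ regime. The ``delicate step'' you flag at the end is in fact already closed by your own second-derivative formula: for large enough $\pi_e$ one has $J_r''(C_r)=\pi_e f_R(C_r)-\lambda_p f_R(C_r-G_r)>0$ on the whole range, so $J_r$ is convex there, any interior stationary point is a local minimum, and the maximum over $[0,C_r^{\max}]$ necessarily sits at an endpoint --- which is exactly the paper's justification for reducing the large-$\pi_e$ case to the boundary comparison.
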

\begin{proof}
See Appendix~\ref{app:prop1}.
\end{proof}

The above analysis assumes that the storage voluntarily agrees to purchase the excess renewable energy. Thus, $\pi_e$ can be at most the energy price $\lambda$. If the time considered is such that the renewable player has no insurance contract, but is still able to sell excess energy to the storage unit, we can set $G_r=0$ and rewrite \eqref{eq:pr4equilibrum} to find the optimal renewable day-ahead commitment as $C_r^* = F_R^{-1}\left(\frac{\lambda - \pi_e}{\lambda_p - \pi_e}\right)$. Thus, we note the opportunity cost between selling energy to the market or to the storage. The higher the reselling price $\pi_e$, the lower the day-ahead commitment. This leads to an increased likelihood of having excess energy to resell. Conversely, $C_r^*$ increases with the day-ahead energy price $\lambda$. 

\begin{figure*}[t]
\begin{multicols}{3}
    \includegraphics[width=0.9\linewidth]{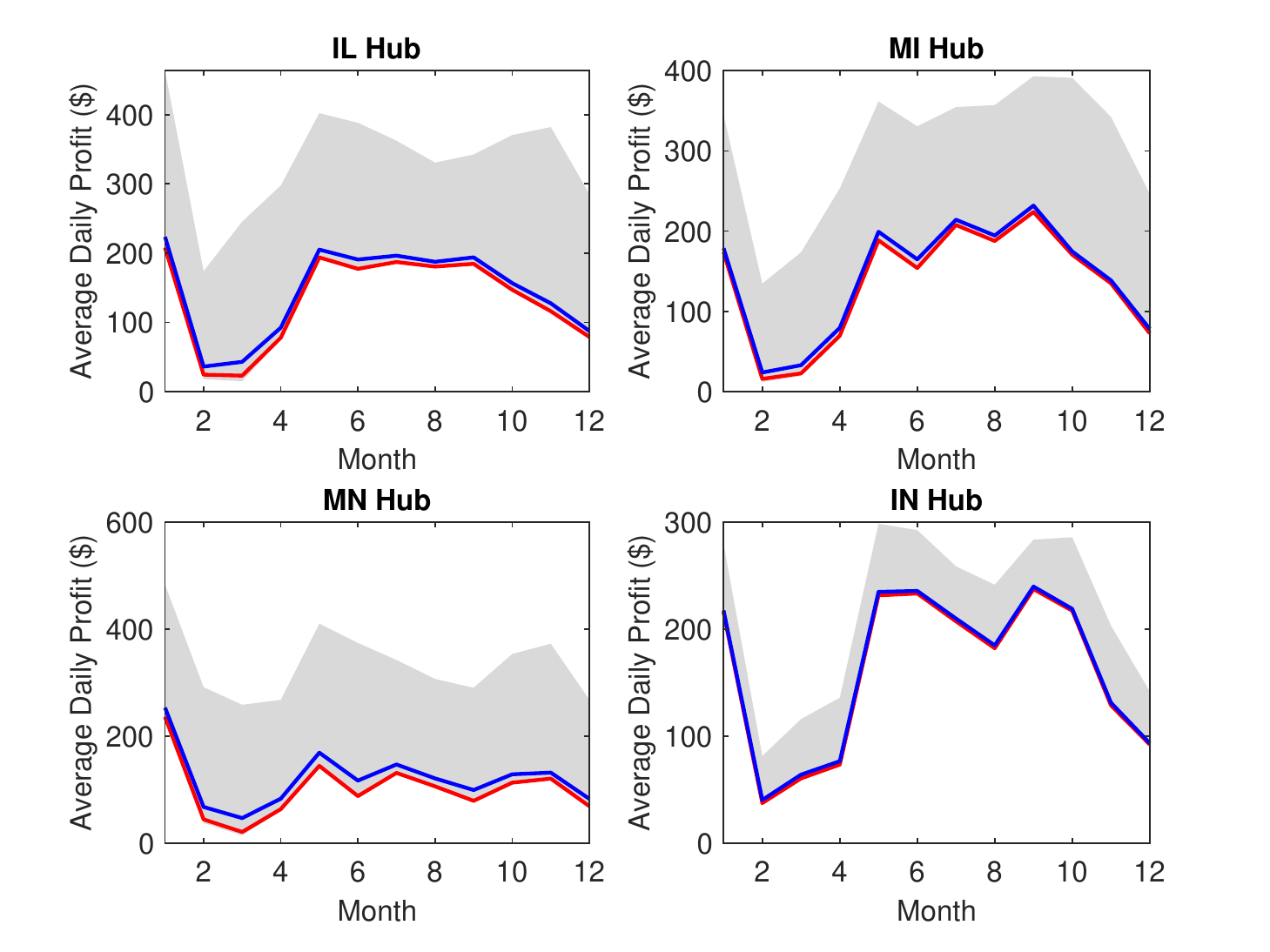}\par\caption{Storage average daily profit for baseline (red) and insurance contract (blue) cases. Shaded area shows profit variation across 1000 scenarios.}\label{fig:Js}
    \includegraphics[width=0.9\linewidth]{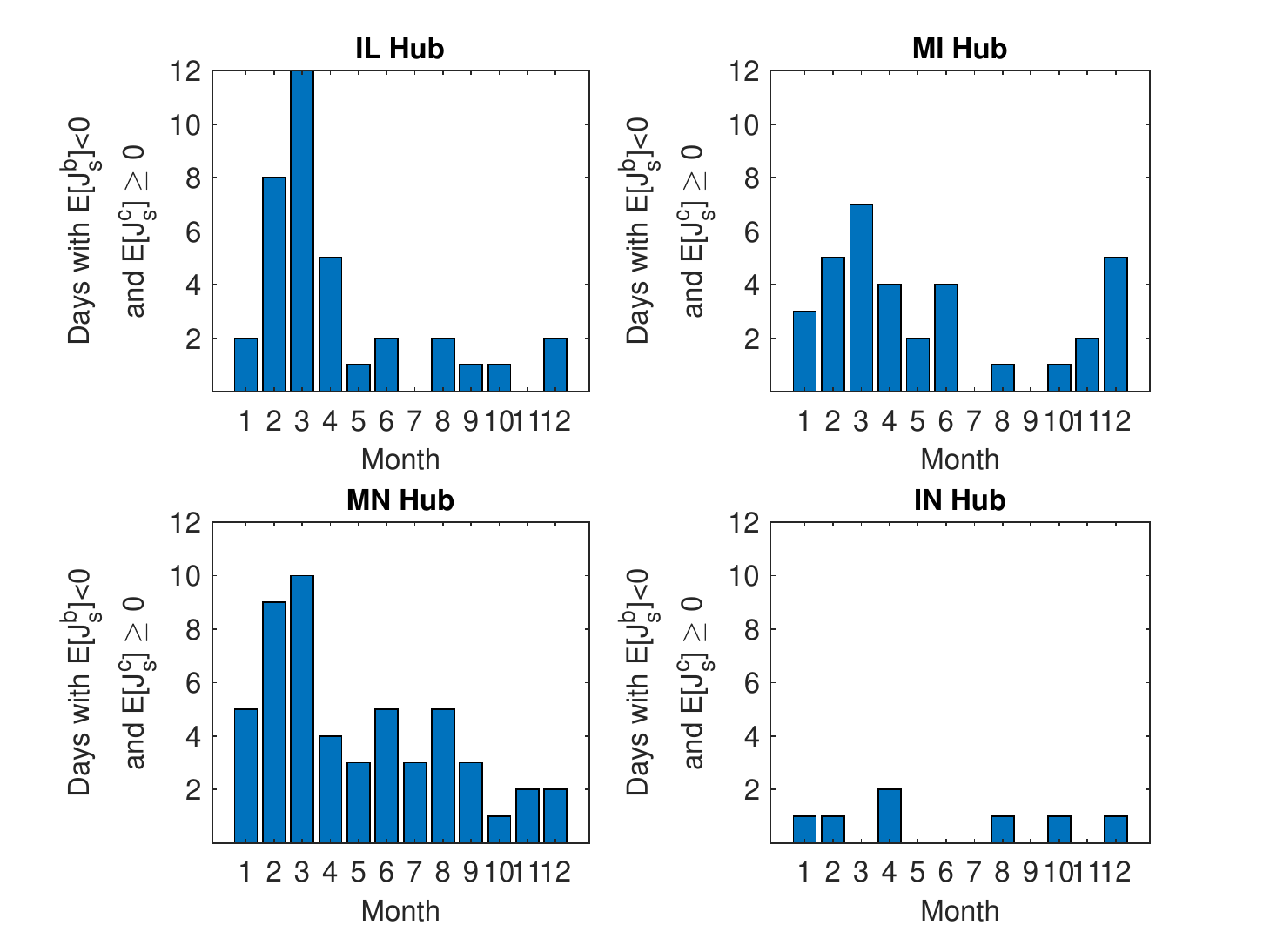}\par\caption{Number of days per month in which the storage unit is profitable as an insurance provider, but not in the day-ahead market.}\label{fig:th4}
    \includegraphics[width=0.9\linewidth]{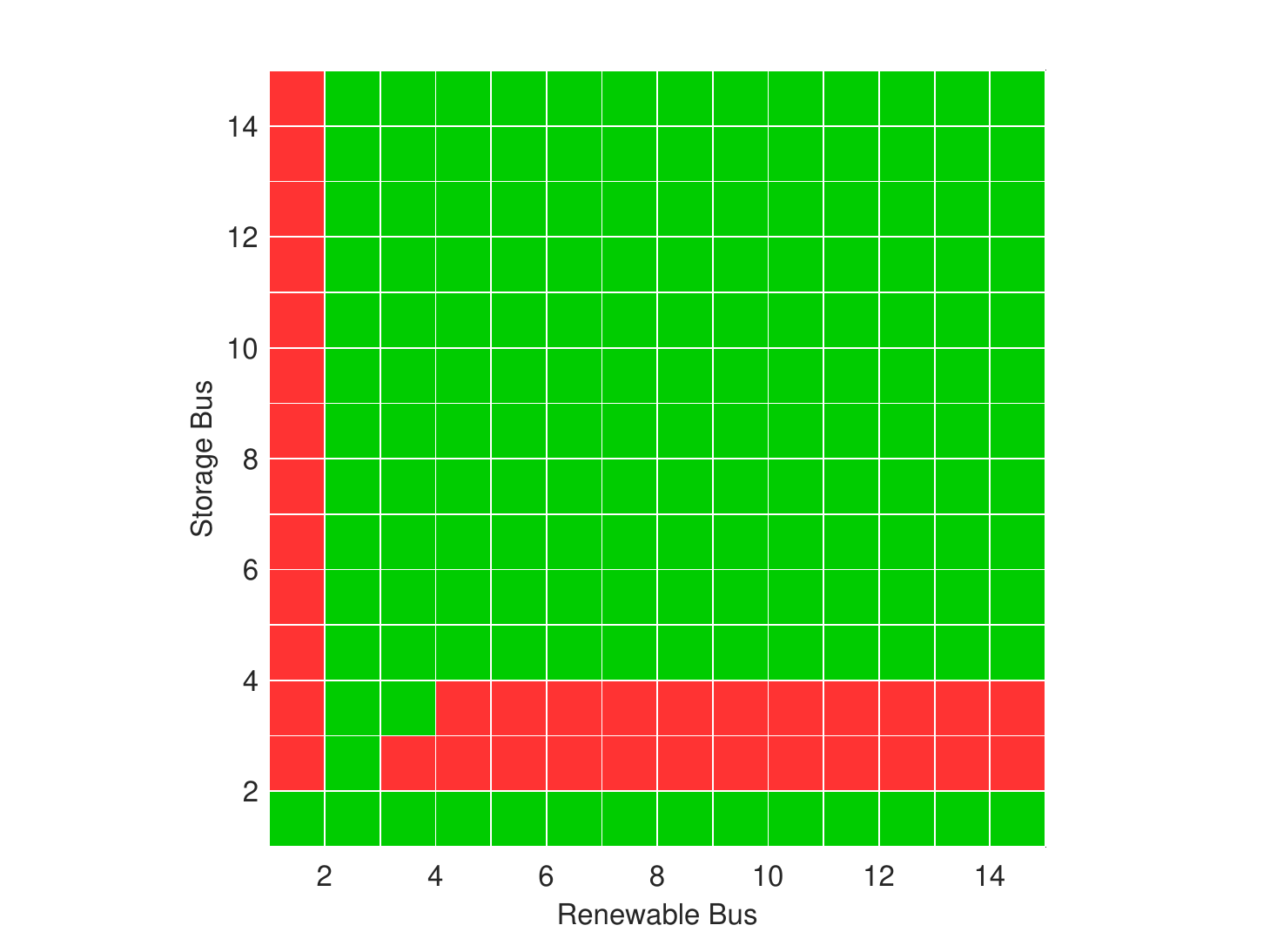}\par\caption{Insurance contract feasibility for the modified IEEE 14-bus test system.}\label{fig:ieee14bus_feasible}
\end{multicols}
\end{figure*}

\section{Case Studies}\label{sec:casestudy}
For both cases studied, we estimated the renewable production distribution from the Wind Integration National Dataset \cite{winddata,draxl}. For each month, a Gaussian distribution was fitted for the hourly wind production in each hour of the day, assuming the productions to be independent across time.

\subsection{Single-Node Case}\label{sec:casestudy1}
We consider four hubs within the Midcontinent Independent System Operator footprint -- Illinois, Michigan, Minnesota, and Indiana. We use the day-ahead prices of these hubs that refer to 2018 \cite{miso}. Four different wind productions are modeled based on each location. The storage unit investigated is a lithium-ion battery with energy capacity $\overline{E}=12\text{MWh}$ and a linear variable operation and maintenance cost of $\$7/\text{MWh}$ \cite{storage}.

We initially seek to confirm the existence of feasible insurance contracts in all four locations. We considered the contract in Corollary~\ref{cor:feasibility} and generated 1000 scenarios for renewable production to evaluate the profit achieved by the storage both when a contract is signed, and in the baseline case. Fig.~\ref{fig:Js} shows the results, where the shaded area corresponds to the profit variation observed across the 1000 scenarios generated in the presence of an insurance contract. As proved in Corollary~\ref{cor:feasibility}, the storage's expected profit with an insurance contract is lower bounded by the profit achieved in the baseline case. These profits are closer in cases with higher probabilities of renewable shortage, since this leads to the storage having to supply energy more frequently.

We also evaluated the condition in Theorem~\ref{th:profitability} to check if there are days when the storage investigated is profitable as an insurance provider, but not  in the day-ahead market. The results in Fig.~\ref{fig:th4} show that this situation happens more often in the Minnesota hub, while it is less frequent in Indiana. This happens because Minnesota is where the storage has the lowest day-ahead expected profit, leading to the highest number of days in which this unit can be profitable as an insurance provider, but not as a supplier in the day-ahead market.

Finally, we evaluated how much renewable energy was taken by the grid in real-time. Through an insurance contract, the storage commits to deliver some energy at the time with peak demand, if needed. Thus, we focused our analysis on that hour, as the behavior of the renewable generator during other times reduces to that observed in the baseline case. Fig.~\ref{fig:renewableincrease} shows that the insurance contract leads to an increase in the renewable energy share in the grid at the time of peak demand, for all hubs and all months. This is due to the change in the optimal bidding strategy of the renewable generator, which gives more room for extra production to be taken by the grid. 

\begin{figure}[htbp]
\centering
\includegraphics[width=0.4\textwidth]{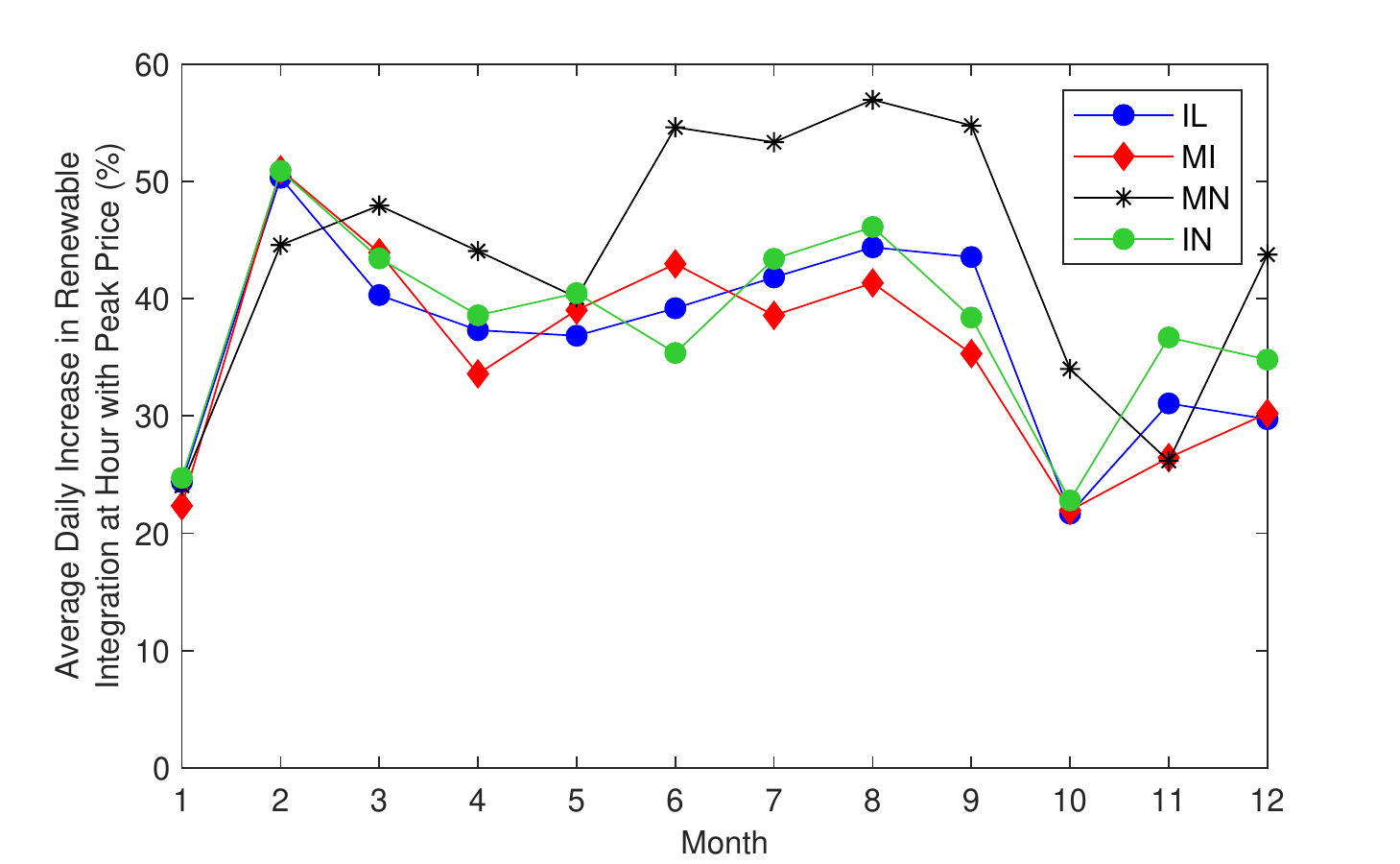}
\caption{Increase in renewable energy share at peaking demand time.}
\label{fig:renewableincrease}
\end{figure}

\subsection{Modified IEEE 14-Bus Test System Case}\label{sec:casestudy2}
We evaluate the insurance contract on the constrained IEEE 14-bus test system shown in Fig.~\ref{fig:ieee14bus}. We adopt the data from MATPOWER \cite{zimmerman} with the following modifications: (a) all transmission lines have a 80MW capacity; (b) generators at buses 1 and 2 have a 15MW ramp rate for 30min reserves; (c) a wind power plant with 32MW capacity is added; (d) a battery with energy capacity $\overline{E}=50$MWh, power capacity $\overline{P}=20$MW, linear cost $\$7/\text{MWh}$, loss factor $\alpha=0.95$, charge and discharge efficiencies $\eta^-=\eta^+=0.85$ is added.

\begin{figure}[tbp]
\centering
\includegraphics[width=0.25\textwidth]{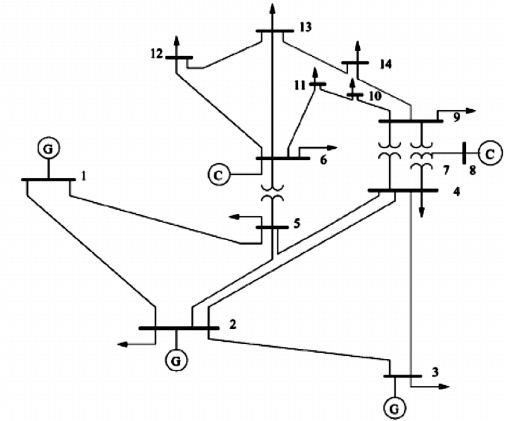}
\caption{IEEE 14-bus test system used in this case study.}
\label{fig:ieee14bus}
\end{figure}

We analyze a multi-period set-up with $N=24$ time slots corresponding to each hour of the day. The distribution for the wind production was estimated considering the month of July and the location in the Illinois hub. Further, we added a demand profile that follows a typical daily demand curve for July in this hub, which was inferred from the day-ahead price curve for this location and month. We considered the penalty for a shortage is such that the ratio $\lambda_k/\lambda_p=0.4~\forall k$. 

We solve for an optimal economic dispatch for this system. We consider 14$\times$14 scenarios, with all possible combinations of bus location for the renewable and the storage players. For each case, we let the wind producer be scheduled based on its baseline commitment and we observe the optimal schedule profile for the storage unit. Then, we determine if an insurance contract is feasible in each case, considering the storage can supply the amount of energy that it was scheduled for in the baseline case, since we know that this is a feasible trade. In this constrained network, the participants will consider their own LMPs  when deciding on the price bounds that will make it profitable for them to have an insurance contract. Fig.~\ref{fig:ieee14bus_feasible} shows the results, where each square is green if the insurance contract is feasible when the wind producer is located at bus $x$ and the storage is at bus $y$, and red otherwise.  

We observe that all squares in the diagonal, corresponding to when the renewable and the storage players are co-located, are cases with feasible insurance contracts. This result conforms with our analytical studies and the previous case study. There are cases, however, in which it is not individual rational for both players to have a contract. This is explained by the price disparity between certain nodes in this network. The storage is discharging at times with high demand, during which the grid becomes congested. The distribution of generation and load in this test case is such that node 1 has consistently the lowest LMP during congestion times, and node 2 has the highest one.

If the renewable generator is at node 1, he will set an upper bound on the contract price that is significantly lower than any other LMP; thus, if the storage is at any other node, he will not accept such a low offer, since supplying to the market at any other LMP is more profitable. This scenario represents the vertical line of red squares in Fig.~\ref{fig:ieee14bus_feasible}. Similarly, if the storage is at node 2, he will set a lower bound on the contract price that is too high for the renewable generator to accept if he is at any other node. The significant price gap between the LMP at bus 3 and at all the other buses explains the remaining cases with no feasible insurance contracts. The LMPs at all the remaining buses are close together, allowing for an insurance contract to be signed in the remaining cases. It could be argued that the storage owner should simply disregard the possibility of an insurance contract and install the storage system at the bus that is more likely to have a high LMP. However, as the number of storage units in the grid increases, the LMP peaks will decrease due to the peak shaving aspect of the storage operation in the grid. Thus, buses at which insurance contracts are currently infeasible may experience a change in this condition once the distribution of storage, as well as of other generators and loads, in the grid changes.

\section{Conclusions}\label{sec:conclusions}
We proposed an insurance contract between a renewable producer and a storage. We showed that this contract is feasible and individually rational, promotes the increase of renewable participation in the market, and serves as an extra source of revenue for storage units that are not profitable in the day-ahead market yet. %Directions for future work include having storage aggregators as insurance providers, as well as the consideration of a long-term dynamic insurance contract, which could be updated as the parties involved learn information through repeated interactions.

\section*{Appendix}
\subsection{Proof of Theorem \ref{th:renewable_decisions}}\label{app:th1}
Using backwards induction, we solve for $C_{rk}^*$ taking the contract decisions as fixed. The utility function \eqref{eq:Jr_c} is concave in the commitment decisions. Using the first order conditions, 
\begin{equation*}\label{eq:dJrcdcr}
    \lambda_{k}-\lambda_p F_{Rk}\left(C_{rk}-G_{rk}\right)=0 \Rightarrow C^*_{rk} = G^*_{rk}+F^{-1}_{Rk} \left(\frac{\lambda_{k}}{\lambda_{p}}\right).
\end{equation*}

Substituting $C^*_{rk}$ back in the expected profit, we have 
\begin{align*}
J_{r}^c &= \sum_{k=0}^{N-1} \left(\lambda_{k}-\pi_{rk}\right)G_{rk}+\lambda_{k}F^{-1}_{Rk} \left(\frac{\lambda_{k}}{\lambda_{p}}\right)\\
&- E_{Rk}\left[I\left(F^{-1}_{Rk} \left(\frac{\lambda_{k}}{\lambda_{p}}\right)-R\right) \lambda_{p}\left(F^{-1}_{Rk} \left(\frac{\lambda_{k}}{\lambda_{p}}\right)-R\right) \right].
\end{align*}
This can be rewritten as $J_{r}^c = \sum_{k=0}^{N-1} \left(\lambda_{k}-\pi_{rk}\right)G_{rk}+J_{r}^{b}(\mathbf{C^*_{r}})$, where $J_{r}^{b}(\mathbf{C^*_{r}})$ is the expected renewable profit for the optimal baseline commitment. Then, it is individual rational for this player to purchase any available reserve if $\pi_{rk} \leq \lambda_k$. Otherwise, this generator is better off without the contract.

\subsection{Proof of Lemma \ref{lem:uu0}}\label{app:lem1}
Let $(\mathbf{u^+},\mathbf{u^-})$ and $(\mathbf{\tilde{u}^+},\mathbf{\tilde{u}^-})$ be storage policies such that 
\begin{align}
    \label{eq:uu0}-u^+_{k}+u^-_{k} &= u_k, ~ u^+_{k}u^-_{k} = 0\\
    \label{eq:uug0}-\tilde{u}^+_{k}+\tilde{u}^-_{k} &= u_k, ~ \tilde{u}^+_{k}\tilde{u}^-_{k} > 0
\end{align}
Following \cite[Theorem 1]{yang}, we can show that $u^+_{k}<\tilde{u}^+_{k}$ and $u^-_{k}<\tilde{u}^-_{k}$. Therefore, for a strictly increasing cost function,
\begin{equation}\label{eq:costineq}
    g(\min(C_{rk}-R_k,u^+_{k}),u^-_{k}) \leq g(u^+_{k},u^-_{k}) < g(\tilde{u}^+_{k},\tilde{u}^-_{k}).
\end{equation}
Let $J_s$ and $\tilde{J}_s$ be the storage profit under these policies. The storage is better off with the policy satisfying the complementarity constraint if and only if $J_s>\tilde{J}_s$. 
From \eqref{eq:uu0} and \eqref{eq:uug0}, $\tilde{u}^+_{k}-u^+_{k}=\tilde{u}^-_{k}-u^-_{k} := U_k > 0.$

In the presence of an insurance contract, and for the case $\min(C_{rk}-R_k,u^+_{k})=u^+_{k}$, the inequality $J_s>\tilde{J}_s$ gives us
\begin{align}
    \nonumber  g(\tilde{u}^+_{k},\tilde{u}^-_{k})-g(u^+_{k},u^-_{k}) &> \pi_{sk}(\tilde{u}^+_{k}-u^+_{k})-\lambda_{k}(\tilde{u}^-_{k}-u^-_{k}) \\
    \label{eq:condcompl}g(\tilde{u}^+_{k},\tilde{u}^-_{k})-g(u^+_{k},u^-_{k}) &> (\pi_{sk}-\lambda_{k})U_{k}.
\end{align}
From \eqref{eq:costineq}, we note that the left hand side of \eqref{eq:condcompl} is positive. Thus, this condition always holds if $\pi_{sk} \leq \lambda_{k}$. This inequality coincides with the upper bound set by the renewable on the reserve price \eqref{eq:upperbound}, so it must hold when a contract is signed. Thus, in the presence of an insurance contract, the policy satisfying $u^+_{k}u^-_{k} = 0$ will be chosen and the constraint \eqref{eq:S_constr2} can be relaxed. It is straightforward to note that this also holds if $\min(C_{rk}-R_k,u^+_{k})=C_{rk}-R_k$, as well as for the baseline case, for which the right hand side of \eqref{eq:condcompl} will be zero.

\subsection{Proof of Theorem \ref{th:storage_decisions}}\label{app:th2}
Since the storage baseline problem is convex, an optimal solution will satisfy the Karush-Kuhn-Tucker (KKT) conditions. Let $\mathbf{\underline{\mu}}=[\underline{\mu}_{0},...,\underline{\mu}_{(N-1)}]^T$ and $\mathbf{\overline{\mu}}=[\overline{\mu}_{0},...,\overline{\mu}_{(N-1)}]^T$ be the Lagrange multipliers corresponding to the lower and upper storage energy constraints \eqref{eq:S_constr1c}, respectively, and the multipliers $\overline{\rho_k}$ and $\underline{\rho_k}$ refer to the positivity constraint \eqref{eq:S_constr3}. Further, let $\mathbf{A^{+T}_k}$ and $\mathbf{A^{-T}_k}$ be the transpose of the $k$-th column of $\mathbf{A^{+}}$ and $\mathbf{A^{-}}$. For every $k$, the KKT conditions are given by
\begin{align}
    \label{eq:kkt_s1}\lambda_k-\frac{\partial}{\partial u^+_k}g(u^+_{k},u^-_{k})-\mathbf{A^{+T}_k\underline{\mu}}+\mathbf{A^{-T}_k\overline{\mu}}+\overline{\rho_k} &= 0,\\
    \label{eq:kkt_s2}-\lambda_k-\frac{\partial}{\partial u^-_k}g(u^+_{k},u^-_{k})+\mathbf{A^{+T}_k\underline{\mu}}-\mathbf{A^{-T}_k\overline{\mu}}+\underline{\rho_k} &= 0, \\
    \label{eq:kkt_c1}\mathbf{\underline{\mu}} \circ  (\mathbf{A^+u^+ + A^-u^-}) &= \mathbf{0},\\
     \label{eq:kkt_c2}\mathbf{\overline{\mu}} \circ (\mathbf{\overline{E}-A^+u^+-A^-u^-}) &= \mathbf{0},
\end{align}
where, $A \circ B$ denotes the element-wise product of matrices $A$ and $B$. Lastly, all multipliers must be non-negative. Proof that an arbitrage policy is optimal in the baseline case follows from direct inspection of \eqref{eq:kkt_s1}--\eqref{eq:kkt_c2} for $\mathbf{u^+} = \mathbf{e_{\max}}\overline{E}$ and $\mathbf{u^-}=\mathbf{e_{\min}}\overline{E}$. If the storage signs the insurance contract and offers $\overline{E}$ at time $k=\max$, his expected profit is
\begin{equation}\label{eq:jsc_th2}
\begin{split}
    J_s^{c} = (&\pi_{s,\max}-\lambda_{\min})\overline{E}-g(0,\overline{E})\\
    &-E_{R,\max}[g(\min(C_{r,\max}-R_{\max},\overline{E}),0)].
\end{split}
\end{equation}
The last term in \eqref{eq:jsc_th2} can be rewritten as $\int_{0}^{C_{r,\max}-\overline{E}}g(\overline{E},0)f_{\max}(r)dr
+\int_{C_{r,\max}-\overline{E}}^{C_{r,\max}}g(C_{r,\max}-R_{\max},0)f_{\max}(r)dr$. If he opts out and only bids in the day-ahead market instead,
\begin{equation}\label{eq:jsda_th2}
    J_s^{b} = (\lambda_{\max}-\lambda_{\min})\overline{E}-g(0,\overline{E})-g(\overline{E},0).
\end{equation}
Individual rationality holds if $J_s^{c}\geq J_s^{b}$. Using the expressions above and recognizing that $\int_{0}^{C_{r,\max}-\overline{E}}g(\overline{E},0)f_{\max}(r)dr = g(\overline{E},0)F_{r,\max}(C_{r,\max}-\overline{E})$
yields the condition in Theorem~\ref{th:storage_decisions}.

\subsection{Proof of Theorem \ref{th:feasible}}\label{app:th3}
The upper bound \eqref{eq:upperbound} on the price is greater than or equal to the lower bound \eqref{eq:lowerbound} for $k=\max$ if
\begin{equation}\label{eq:nonempty}
\begin{split}
    &g(\overline{E},0)\left(1-F_{r,\max}(C_{r,\max}-\overline{E})\right) \geq \\
    &\int_{C_{r,\max}-\overline{E}}^{C_{r,\max}}g(C_{r,\max}-R_{\max},0)f_{\max}(r)dr.
\end{split}
\end{equation}
The right-hand side integral can be bounded from above by the quantity $g(\overline{E},0)\left(F_{r,\max}(C_{r,\max})-F_{r,\max}(C_{r,\max}-\overline{E})\right),$ and further, $g(\overline{E},0)\left(1-F_{r,\max}(C_{r,\max}-\overline{E})\right)$, which is the same as the left-hand side of \eqref{eq:nonempty}. Thus, the condition for the reserve price interval to be non-empty holds.

\subsection{Proof of Corollary \ref{cor:feasibility}}\label{app:cor1}
The arbitrage policy is feasible, and, from Theorem~\ref{th:feasible}, the contract price proposed is the upper bound of the interval that guarantees individual rationality for both players. Then, this insurance contract is feasible. To find the lower bound on the profit, we analyze the worst-case scenario. In this case, $\min(C_{r,\max}-R_{\max},\overline{E})=\overline{E}$, which leads to $J_s^{b}=J_s^{c}$.

\subsection{Proof of Proposition \ref{prop:extension}}\label{app:prop1}
The first and second order derivatives of $J_{r}$ are
\begin{align}
    \label{eq:dJr}\frac{\partial J_r}{\partial C_r} &= \lambda - \lambda_pF_R\left(C_r-G_r\right) - \pi_e\left(1-F_R(C_r)\right)\\
    \label{eq:d2Jr}\frac{\partial^2 J_r}{\partial C_r^2} &= - \lambda_pf_R\left(C_r-G_r\right) + \pi_ef_R(C_r)
\end{align}
Note that \eqref{eq:d2Jr} will be negative (resp. positive) for small (resp. large) enough $\pi_{e}$, which means the function is concave (resp. convex). Further, condition \eqref{eq:pr4equilibrum} is found by setting \eqref{eq:dJr} to zero. In the convex case, the optimal bid is found by checking when the expected utility is maximized at each boundary.

% use section* for acknowledgment
%\section*{Acknowledgment}
%The authors would like to thank...

% Can use something like this to put references on a page
% by themselves when using endfloat and the captionsoff option.
\ifCLASSOPTIONcaptionsoff
  \newpage
\fi

% trigger a \newpage just before the given reference
% number - used to balance the columns on the last page
% adjust value as needed - may need to be readjusted if
% the document is modified later
%\IEEEtriggeratref{8}
% The "triggered" command can be changed if desired:
%\IEEEtriggercmd{\enlargethispage{-5in}}

% references section

\bibliographystyle{IEEEtran}
\bibliography{references}

% Generated by IEEEtran.bst, version: 1.14 (2015/08/26)
\begin{thebibliography}{10}
\providecommand{\url}[1]{#1}
\csname url@samestyle\endcsname
\providecommand{\newblock}{\relax}
\providecommand{\bibinfo}[2]{#2}
\providecommand{\BIBentrySTDinterwordspacing}{\spaceskip=0pt\relax}
\providecommand{\BIBentryALTinterwordstretchfactor}{4}
\providecommand{\BIBentryALTinterwordspacing}{\spaceskip=\fontdimen2\font plus
\BIBentryALTinterwordstretchfactor\fontdimen3\font minus
  \fontdimen4\font\relax}
\providecommand{\BIBforeignlanguage}[2]{{%
\expandafter\ifx\csname l@#1\endcsname\relax
\typeout{** WARNING: IEEEtran.bst: No hyphenation pattern has been}%
\typeout{** loaded for the language `#1'. Using the pattern for}%
\typeout{** the default language instead.}%
\else
\language=\csname l@#1\endcsname
\fi
#2}}
\providecommand{\BIBdecl}{\relax}
\BIBdecl

\bibitem{divya}
K.~Divya and J.~{{\O}}stergaard, ``Battery energy storage technology for power
  systems—an overview,'' \emph{Electric Power Systems Research}, vol.~79,
  no.~4, pp. 511 -- 520, 2009.

\bibitem{xu2016}
{Bolun Xu}, Y.~{Dvorkin}, D.~S. {Kirschen}, C.~A. {Silva-Monroy}, and
  J.~{Watson}, ``A comparison of policies on the participation of storage in
  u.s. frequency regulation markets,'' in \emph{2016 IEEE Power and Energy
  Society General Meeting (PESGM)}, July 2016, pp. 1--5.

\bibitem{pjm_stateofmarket}
``Quarterly state of the market report for pjm: January through march,''
  Monitoring Analytics, LLC, Tech. Rep., May 2019.

\bibitem{denholm}
P.~Denholm and R.~Margolis, ``The potential for energy storage to provide
  peaking capacity in california under increased penetration of solar
  photovoltaics,'' National Renewable Energy Laboratory, Tech. Rep.
  NREL/TP-6A20-70905, March 2018.

\bibitem{kalathil2017sharing}
D.~Kalathil, C.~Wu, K.~Poolla, and P.~Varaiya, ``The sharing economy for the
  electricity storage,'' \emph{IEEE Transactions on Smart Grid}, 2017.

\bibitem{bpa}
\emph{Bonneville Power Administration, Self Supply of Balancing Services}, Bpa
  transmission business practice, version 3~ed., 2017.

\bibitem{porter}
K.~Porter, K.~Starr, and A.~Mills, ``Variable generation and electricity
  markets. a living summary of markets and market rules for variable generation
  in north america,'' Utility Variable Generation Integration Group (UVIG),
  Tech. Rep., 2015.

\bibitem{bitar}
E.~Y. Bitar, R.~Rajagopal, P.~P. Khargonekar, K.~Poolla, and P.~Varaiya,
  ``Bringing wind energy to market,'' \emph{IEEE Transactions on Power
  Systems}, vol.~27, no.~3, pp. 1225--1235, Aug 2012.

\bibitem{krishnamurthy}
D.~Krishnamurthy, C.~Uckun, Z.~Zhou, P.~R. Thimmapuram, and A.~Botterud,
  ``Energy storage arbitrage under day-ahead and real-time price uncertainty,''
  \emph{IEEE Transactions on Power Systems}, vol.~33, no.~1, pp. 84--93, Jan
  2018.

\bibitem{wangy}
Y.~Wang, Y.~Dvorkin, R.~Fernández-Blanco, B.~Xu, T.~Qiu, and D.~S. Kirschen,
  ``Look-ahead bidding strategy for energy storage,'' \emph{IEEE Trans.
  Sustainable Energy}, vol.~8, no.~3, pp. 1106--1117, July 2017.

\bibitem{xu}
B.~Xu, J.~Zhao, T.~Zheng, E.~Litvinov, and D.~S. Kirschen, ``Factoring the
  cycle aging cost of batteries participating in electricity markets,''
  \emph{IEEE Trans. Power Systems}, vol.~33, no.~2, pp. 2248--2259, March 2018.

\bibitem{baggu}
M.~M. {Baggu}, A.~{Nagarajan}, D.~{Cutler}, D.~{Olis}, T.~O. {Bialek}, and
  M.~{Symko-Davies}, ``Coordinated optimization of multiservice dispatch for
  energy storage systems with degradation model for utility applications,''
  \emph{IEEE Transactions on Sustainable Energy}, vol.~10, no.~2, pp. 886--894,
  April 2019.

\bibitem{pandzic}
H.~Pand\v{z}i\'{c}, Y.~Dvorkin, and M.~Carri\'{o}n, ``Investments in merchant
  energy storage: Trading-off between energy and reserve markets,''
  \emph{Applied Energy}, vol. 230, pp. 277 -- 286, 2018.

\bibitem{zou}
P.~{Zou}, Q.~{Chen}, Q.~{Xia}, G.~{He}, and C.~{Kang}, ``Evaluating the
  contribution of energy storages to support large-scale renewable generation
  in joint energy and ancillary service markets,'' \emph{IEEE Transactions on
  Sustainable Energy}, vol.~7, no.~2, pp. 808--818, April 2016.

\bibitem{mcconnell}
D.~McConnell, T.~Forcey, and M.~Sandiford, ``Estimating the value of
  electricity storage in an energy-only wholesale market,'' \emph{Applied
  Energy}, vol. 159, pp. 422 -- 432, 2015.

\bibitem{baker}
K.~{Baker}, G.~{Hug}, and X.~{Li}, ``Energy storage sizing taking into account
  forecast uncertainties and receding horizon operation,'' \emph{IEEE
  Transactions on Sustainable Energy}, vol.~8, no.~1, pp. 331--340, Jan 2017.

\bibitem{ghofrani}
M.~{Ghofrani}, A.~{Arabali}, M.~{Etezadi-Amoli}, and M.~S. {Fadali}, ``A
  framework for optimal placement of energy storage units within a power system
  with high wind penetration,'' \emph{IEEE Transactions on Sustainable Energy},
  vol.~4, no.~2, pp. 434--442, April 2013.

\bibitem{thrampoulidis}
C.~{Thrampoulidis}, S.~{Bose}, and B.~{Hassibi}, ``Optimal placement of
  distributed energy storage in power networks,'' \emph{IEEE Transactions on
  Automatic Control}, vol.~61, no.~2, pp. 416--429, Feb 2016.

\bibitem{kim}
\BIBentryALTinterwordspacing
J.~H. Kim and W.~B. Powell, ``Optimal energy commitments with storage and
  intermittent supply,'' \emph{Operations Research}, vol.~59, no.~6, pp.
  1347--1360, 2011. [Online]. Available:
  \url{https://doi.org/10.1287/opre.1110.0971}
\BIBentrySTDinterwordspacing

\bibitem{su}
H.~Su and A.~E. Gamal, ``Modeling and analysis of the role of energy storage
  for renewable integration: Power balancing,'' \emph{IEEE Transactions on
  Power Systems}, vol.~28, no.~4, pp. 4109--4117, Nov 2013.

\bibitem{teleke}
S.~{Teleke}, M.~E. {Baran}, S.~{Bhattacharya}, and A.~Q. {Huang}, ``Optimal
  control of battery energy storage for wind farm dispatching,'' \emph{IEEE
  Trans. Energy Conversion}, vol.~25, no.~3, pp. 787--794, Sep. 2010.

\bibitem{kaabeche}
\BIBentryALTinterwordspacing
A.~Kaabeche, M.~Belhamel, and R.~Ibtiouen, ``Sizing optimization of
  grid-independent hybrid photovoltaic/wind power generation system,''
  \emph{Energy}, vol.~36, no.~2, pp. 1214 -- 1222, 2011. [Online]. Available:
  \url{http://www.sciencedirect.com/science/article/pii/S0360544210006699}
\BIBentrySTDinterwordspacing

\bibitem{asensio}
M.~{Asensio} and J.~{Contreras}, ``Risk-constrained optimal bidding strategy
  for pairing of wind and demand response resources,'' \emph{IEEE Transactions
  on Smart Grid}, vol.~8, no.~1, pp. 200--208, 2017.

\bibitem{li}
Y.~{Li}, Z.~{Yang}, G.~{Li}, D.~{Zhao}, and W.~{Tian}, ``Optimal scheduling of
  an isolated microgrid with battery storage considering load and renewable
  generation uncertainties,'' \emph{IEEE Transactions on Industrial
  Electronics}, vol.~66, no.~2, pp. 1565--1575, 2019.

\bibitem{yang}
P.~Yang and A.~Nehorai, ``Joint optimization of hybrid energy storage and
  generation capacity with renewable energy,'' \emph{IEEE Transactions on Smart
  Grid}, vol.~5, no.~4, pp. 1566--1574, July 2014.

\bibitem{caiso}
``California independent system operator settlements and billings,'' available
  \url{http://www.caiso.com/Documents/Section11-CAISOSettlements-Billing-asof-Jan1-2019.pdf}.

\bibitem{cho}
J.~Cho and A.~N. Kleit, ``Energy storage systems in energy and ancillary
  markets: A backwards induction approach,'' \emph{Applied Energy}, vol. 147,
  pp. 176 -- 183, 2015.

\bibitem{winddata}
C.~Draxl, B.-M. Hodge, A.~Clifton, and J.~McCaa, ``Overview and meteorological
  validation of the wind integration national dataset toolkit,'' National
  Renewable Energy Laboratory, Tech. Rep. NREL/TP-5000-61740, April 2015.

\bibitem{draxl}
C.~Draxl, A.~Clifton, B.-M. Hodge, and J.~McCaa, ``The wind integration
  national dataset toolkit,'' \emph{Applied Energy}, vol. 151, pp. 355 -- 366,
  2015.

\bibitem{miso}
\BIBentryALTinterwordspacing
``Midcontinent independent system operator market reports,'' accessed: October
  2019. [Online]. Available:
  \url{https://www.misoenergy.org/markets-and-operations/market-reports/}
\BIBentrySTDinterwordspacing

\bibitem{storage}
V.~Viswanathan, M.~Kintner-Meyer, P.~Balducci, and C.~Jin, ``National
  assessment of energy storage for grid balancing and arbitrage: Cost and
  performance characterization,'' PNNL, Tech. Rep. PNNL-21388, March 2013.

\bibitem{zimmerman}
R.~D. {Zimmerman}, C.~E. {Murillo-Sanchez}, and R.~J. {Thomas}, ``Matpower:
  Steady-state operations, planning, and analysis tools for power systems
  research and education,'' \emph{IEEE Transactions on Power Systems}, vol.~26,
  no.~1, pp. 12--19, Feb 2011.

\end{thebibliography}

% biography section
% 
% If you have an EPS/PDF photo (graphicx package needed) extra braces are
% needed around the contents of the optional argument to biography to prevent
% the LaTeX parser from getting confused when it sees the complicated
% \includegraphics command within an optional argument. (You could create
% your own custom macro containing the \includegraphics command to make things
% simpler here.)

\begin{IEEEbiography}[{\includegraphics[width=1in,height=1.25in,clip,keepaspectratio]{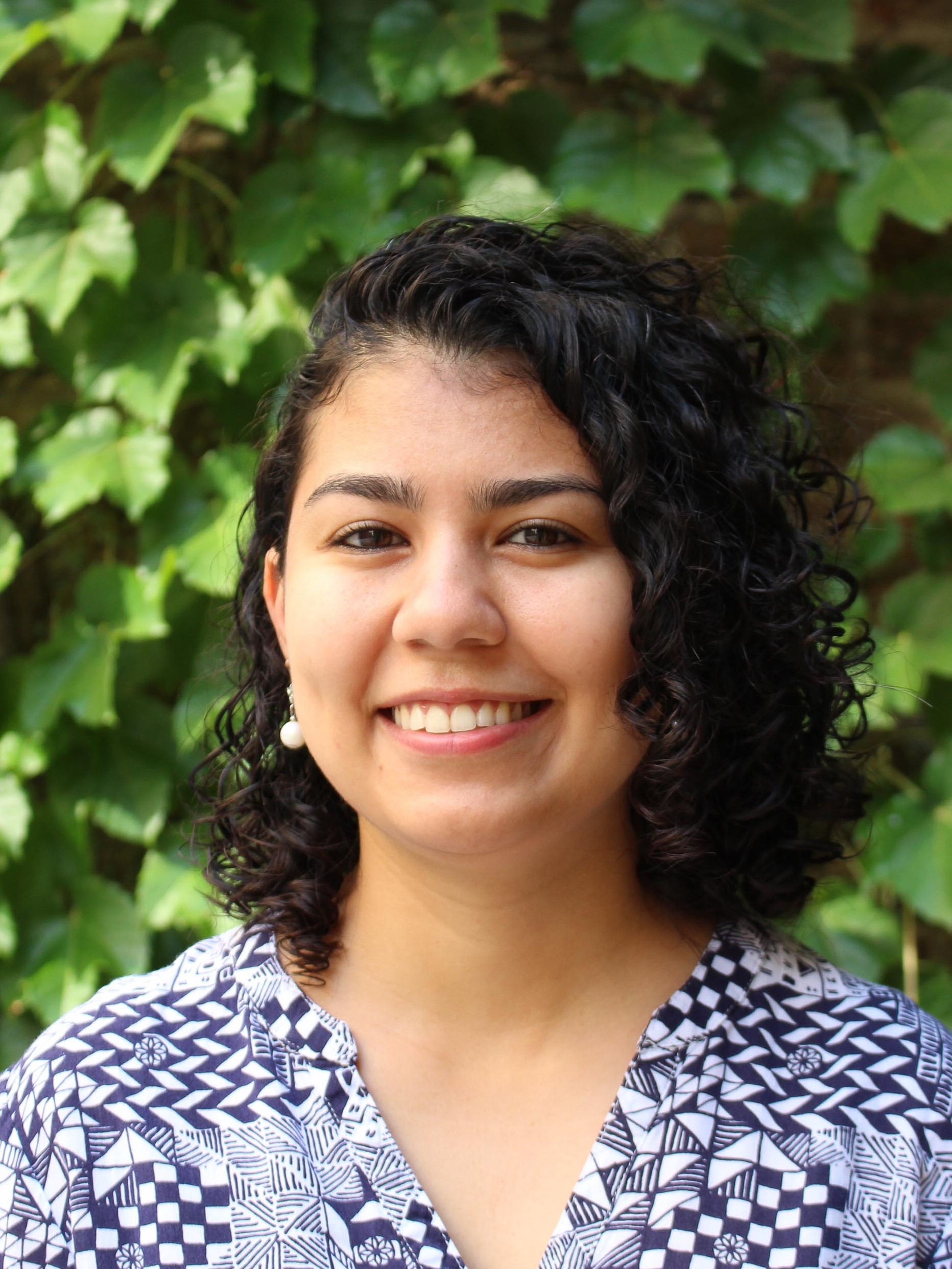}}]{Nayara Aguiar} (S’10–GS’16) received the B.Sc. degree in electrical engineering from the Federal University of Campina Grande, Brazil, in 2016, and the M.S. degree in electrical engineering from the University of Notre Dame in 2018, where she is currently pursuing the Ph.D. degree with the Department of Electrical Engineering. Her current research interests include design and analysis of electricity markets in the presence of intermittent renewable energy generation. She was a recipient of the 2019 Patrick and Jana Eilers Graduate Student Fellowship for Energy Related Research Electrical Engineering from the Center for Sustainable Energy at Notre Dame. 
\end{IEEEbiography}

\begin{IEEEbiography}[{\includegraphics[width=1in,height=1.25in,clip,keepaspectratio]{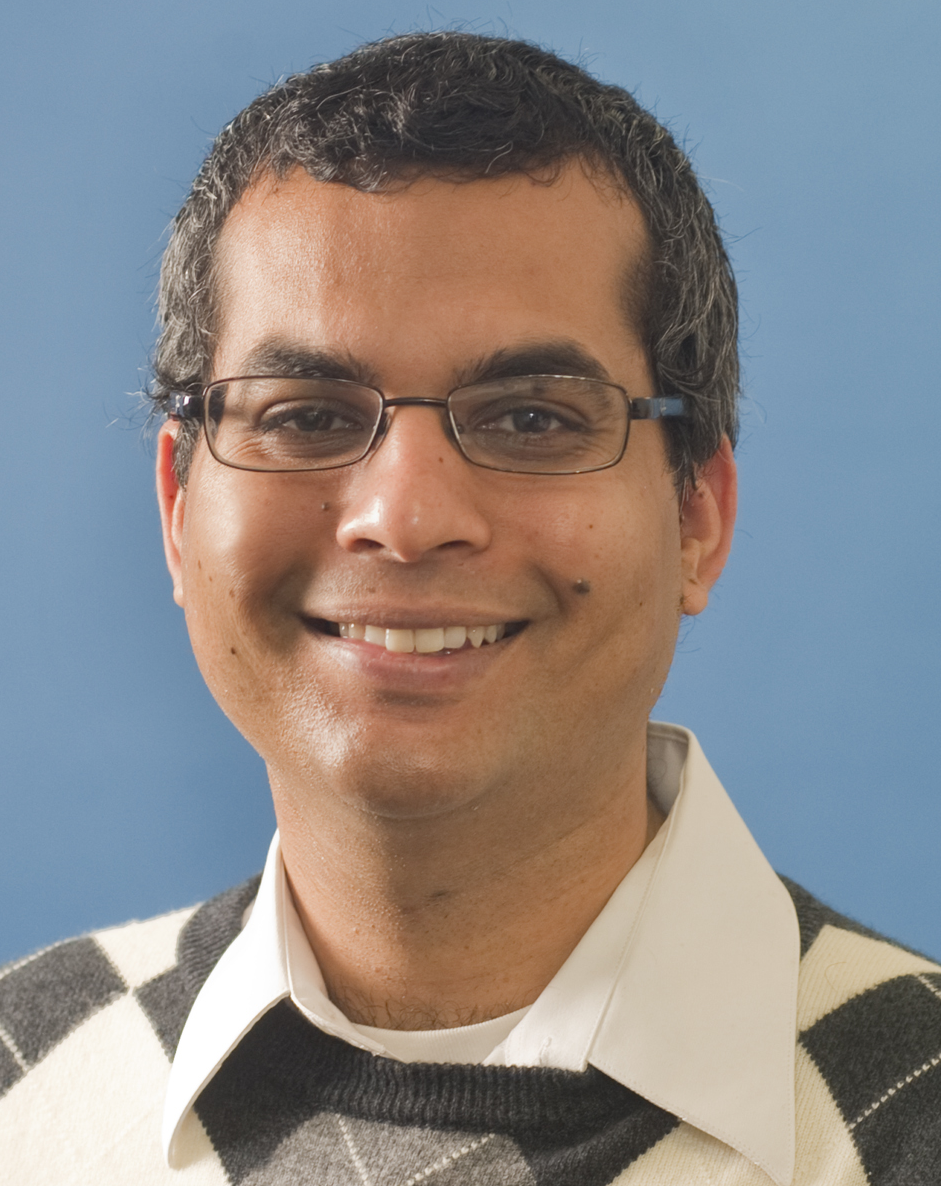}}]{Vijay Gupta} (M’07–SM’18) is in the Department of Electrical Engineering at the University of Notre Dame. He received his B. Tech degree at Indian Institute of Technology, Delhi, and his M.S. and Ph.D. at California Institute of Technology. He received the 2018 Antonio J Ruberti Award from the IEEE Control Systems Society, the 2013 Donald P. Eckman Award from the American Automatic Control Council and a 2009 National Science Foundation (NSF) CAREER Award. His research and teaching interests are broadly in the interface of communication, control, distributed computation, and human decision making. 
\end{IEEEbiography}

%\begin{IEEEbiography}[{\includegraphics[width=1in,height=1.25in,clip,keepaspectratio]{mshell}}]{Michael Shell}
% or if you just want to reserve a space for a photo:

%\begin{IEEEbiography}{Michael Shell}
%Biography text here.
%\end{IEEEbiography}

% insert where needed to balance the two columns on the last page with
% biographies
%\newpage

%\begin{IEEEbiographynophoto}{Jane Doe}
%Biography text here.
%\end{IEEEbiographynophoto}

% You can push biographies down or up by placing
% a \vfill before or after them. The appropriate
% use of \vfill depends on what kind of text is
% on the last page and whether or not the columns
% are being equalized.

%\vfill

% Can be used to pull up biographies so that the bottom of the last one
% is flush with the other column.
%\enlargethispage{-5in}

% that's all folks
\end{document}